\newcommand{\Comments}{0}
\definecolor{gray}{gray}{0.5}
\definecolor{lightred}{rgb}{1,0.6,0.6}
\definecolor{darkgreen}{rgb}{0,0.5,0}
\definecolor{myorange}{rgb}{0.8,0.7,0.5}
\definecolor{darkblue}{rgb}{0.0,0.0,0.5}
\newcommand{\R}{\mathbb{R}}
\renewcommand{\vec}[1]{{\mathbf{#1}}}
\renewcommand{\r}{\vec{r}}
\newcommand{\q}{\vec{q}}
\newcommand{\reals}{\mathbb{R}}
\newif\ifshowcomments
\definecolor{darkgreen}{rgb}{0,0.75,0}
\newcommand{\docomment}[3]{\ifnum\Comments=1 \textcolor{#1}{[ #2 : #3 ]} \fi}
\newcommand{\maneesha}[1]{\docomment{blue}{Maneesha}{#1}}
\newcommand{\squishlist}{
   \begin{list}{$\bullet$}
    { \setlength{\itemsep}{0pt}      \setlength{\parsep}{3pt}
      \setlength{\topsep}{0pt}       \setlength{\partopsep}{0pt}
      \setlength{\leftmargin}{2.5em} \setlength{\labelwidth}{1em}
      \setlength{\labelsep}{0.5em} } }
\newcommand{\squishend}{
    \end{list}}
\author{Maneesha Papireddygari}
\affiliation{
  \institution{Boston College}
  \city{Boston}
  \country{U.S.A.}}
\email{papiredd@bc.edu}
\author{Xintong Wang}
\affiliation{
  \institution{Rutgers University}
  \city{New Brunswick}
  \country{U.S.A.}}
\email{xintong.wang@rutgers.edu}
\author{Bo Waggoner}
\affiliation{
  \institution{University of Colorado, Boulder}
  \city{Boulder}
  \country{U.S.A.}}
\email{bwag@colorado.edu}
\author{David M.\ Pennock}
\affiliation{
  \institution{Rutgers University}
  \city{New Brunswick}
  \country{U.S.A.}}
\email{dpennock@dimacs.rutgers.edu}
\begin{abstract}
Automated Market Makers (AMMs) are used to provide liquidity for combinatorial prediction markets that would otherwise be too thinly traded. %useful tools for forecasting and betting markets including election forecasting, horse racing, financial options, etc.
They offer both buy and sell prices for any of the doubly exponential many possible securities that the market can offer. 
The problem of setting those prices is known to be \#P-hard for the original and most well-known AMM, the logarithmic market scoring rule (LMSR) market maker \cite{chen2008complexity}.
%Unfortunately, they suffer from computational complexity issues as they need to handle an exponential number of outcomes.
We focus on another natural AMM, the Constant Log Utility Market Maker (CLUM). Unlike LMSR, whose worst-case loss bound grows with the number of outcomes, CLUM has constant worst-case loss, allowing the market to add outcomes on the fly and even operate over countably infinite many outcomes, among other features. Simpler versions of CLUM underpin several Decentralized Finance (DeFi) mechanisms including the Uniswap protocol that handles billions of dollars of cryptocurrency trades daily.
We first establish the computational complexity of the problem: we prove that pricing securities is \#P-hard for CLUM, via a reduction from the model counting 2-SAT problem.
In order to make CLUM more practically viable, we propose an approximation algorithm for pricing securities that works with high probability. 
This algorithm assumes access to an oracle capable of determining the maximum shares purchased of any one outcome and the total number of outcomes that has that maximum amount purchased.
We then show that this oracle can be implemented in polynomial time when restricted to interval securities, which are used in designing financial options.
\end{abstract}
\keywords{Automated Market Makers, Combinatorial Markets, Decentralized Finance (DeFi), Approximation Algorithms}
\newcommand{\BibTeX}{\rm B\kern-.05em{\sc i\kern-.025em b}\kern-.08em\TeX}
\begin{document}
%\title{Computational complexity and efficient sampling for pricing in CLUM}
\title{Efficiency of Constant Log Utility Market Makers}
%\author{Submission \#164}
%
%\authorrunning{Submission \#164}
%%% The following commands remove the headers in your paper. For final 
%%% papers, these will be inserted during the pagination process.

\pagestyle{fancy}
\fancyhead{}

%%% The next command prints the information defined in the preamble.

\maketitle 
\section{Introduction}

%\begin{itemize}
%\item automated market makers -- well studied in theory, useful in practice

An \emph{Automated Market Maker} (AMM) is an algorithmic trading mechanism that serves as an alternative to the traditional continuous double auction mechanism ubiquitous in finance. AMMs intermediate every order between buyers and sellers, always offering some prices for traders to buy or sell and thus providing liquidity when order books would otherwise be thin or empty. As an extreme example, a single buyer cannot transact in an auction without a seller; on the other hand, every buyer of any security at any time will be offered some price by an AMM.
%They were originally designed to address liquidity challenges in thin markets, particularly within double auction frameworks.
First introduced by \citet{hanson2003combinatorial}, AMMs have been extensively studied in theoretical research \citep{hanson2007logarithmic, chen2007utility,abernethy2013efficient}.
AMMs have been shown to be powerful aggregation mechanisms of crowd-sourced private information from heterogeneous agents \cite{ostrovsky2012information}.
In practice, AMMs power prediction markets like Manifold Markets and decentralized exchanges (DEXs) for cryptocurrencies including Uniswap, Balancer, and Curve. Uniswap alone has handled over \$3 trillion in total trade at a pace of over \$1 billion per day.\footnote{According to DeFiLlama ( \url{https://defillama.com/dexs/uniswap} ) and personal communication with a Uniswap Foundation team member.}

%\item Motivate the need for combinatorial markets
AMMs are especially useful in combinatorial settings where the number of distinct outcomes is exponential and the number of tradable assets (subsets of outcomes) is doubly exponential.
In this enormous sea of possibilities, a buyer can easily request a security that no seller has yet specifically offered to sell.
%In an auction, the buyer would fail to trade.
However, an AMM would immediately quote some price for the buyer. %, even when the number of possibilities is exponential.
Of course, storing the full list of prices is intractable. %But even computing prices on demand is hard. To operate a real market, prices must be computed 
The market must instead compute prices on demand and do so in a reasonable amount of time.
For example, in a combinatorial prediction market for the US Presidential election where one of two political parties can win in each of fifty states, there are $2^{50}$ possible outcomes and $2^{2^{50}}$ tradable securities (a security is a subset of outcomes, like the subset of electoral maps where ``The Democrats win in exactly two states that begin with the letter M'').
A second example, and one that we will return to later in the paper, is \emph{interval betting}, where traders predict which interval a one-dimensional statistic will fall into at some future date (similar to how financial options traders predict the value of a stock price relative to a strike price at an expiration date).
The number of outcomes (e.g., the number of possible future values of the stock price) is technically infinite; however, in practice, the outcome space will always be discretized. %A natural bound on the number of outcomes is exponential in the number of bits needed to encode the statistic; then the number of tradable securities is the number of possible intervals or the number of outcomes squared. 
Still, the number of possible outcomes can be very high, and the market would like to run all operations (price quotes and trades) in time and space that is logarithmic in the number of outcomes. Equivalently, we can think of the problem size as the number of bits required to encode the statistic, and the number of outcomes as exponential in this problem size. In this interpretation, the market wants all operations to run in time polynomial in the problem size.

%Combinatorial AMM is useful when trading securities or assets that have interdependent relationships.
%As a result, they serve as valuable tools for implementing derivative instruments such as options and futures.
%A running example we will use is 
%\maneesha{give an example of interval betting}.
Straightforward AMM pricing algorithms run in time and space polynomial in the number of outcomes. In combinatorial settings, the number of outcomes is exponential, making such algorithms intractable.
One practically crucial question is whether prices can be computed efficiently: that is, logarithmic in the number of outcomes or polynomial time overall.
%One practical concern that arises is the computational complexity associated with accurately pricing securities in such markets.
\citet{chen2008complexity} give mostly negative answers.
They prove that computing prices in a logarithmic market scoring rule (LMSR) AMM is \#P-hard in Boolean combinatorial settings (i.e., the US Presidential election example) and in permutation combinatorial settings (i.e., where outcomes are the possible permutations of a set, as in a horse race), even when the types of securities allowed to trade are severely limited.
%pricing securities in a combinatorial Logarithmic Market Scoring Rule (LMSR) market is \#P-hard.
They also propose an approximation algorithm for subset betting in the permutation setting. Their algorithm requires a number of steps inversely proportional to the approximation error. The resulting market maker has a loss bound that is linear (as opposed to logarithmic) in the number of outcomes and that is unbounded unless the approximation error is zero. %latter mentioned special case.
%\maneesha{Spell out how this doesnt traslate with Frongillo et al.}

\textit{Constant Logarithmic Utility Market Maker (CLUM): } \citet{chen2007utility} define an AMM that operates by always maintaining a constant expected utility with respect to its prior for some risk-averse utility function: in the case of CLUM, the logarithmic utility function. They show how to convert among constant-utility market makers, scoring-rule-based market makers, and cost-function-based market makers. All of these algorithms bound the worst-case loss of the market maker.
%A market maker that has less inherent value to the information traded in the market is more averse to these losses.

\citet{hanson2003combinatorial} designed LMSR, the first AMM, and proved that it has unique desirable properties including respecting certain statistical independencies (e.g., a wager on event $A$ conditional on event $B$ does not affect the price of $B$).
\citet{chen2007utility} showed that LMSR corresponds to a constant negative exponential utility market maker. %, and has a worst-case loss that grows with the number of outcomes, \citet{chen2007utility}
They also introduce the constant logarithmic utility market maker (CLUM). One feature of CLUM is that it has constant worst-case loss---independent of the number of outcomes---whereas the loss bound of LMSR grows with the number of outcomes.
%In addition, this loss is independent of the number of securities traded and 
As a result, CLUM can be used to operate markets that add outcomes dynamically (for example, an options market that allows traders to create custom strike prices) and even handle countably infinite outcome spaces (for example, a market to predict the future top trending term on X which may be a newly invented term).

Recently, \citet{frongillo2024axiomatic} proved a new and surprising result: constant utility market makers are equivalent to \emph{constant function market makers} (CFMMs), a class of AMMs that were developed independently and are extensively used to create decentralized exchanges (DEX) on blockchains, handling trillions of dollars in trade.
CFMMs operate by declaring a potential function $\varphi$ and specifying the initial reserves held by the market.
A trade is accepted only if the value of this potential function remains constant before and after the trade.
CLUMs are equivalent to \emph{constant product market makers} (CPMMs), a specific class of CFMMs that are widely used in DEX due to their desirable theoretical properties \citep{schlegel2022axioms}.
Some prominent examples include Balancer and Uniswap.
Introduced by \citet{adams2020whitepaperv2}, Uniswap is the predominant decentralized exchange in the Ethereum ecosystem, with a potential function of $\varphi(q_1,q_2) = q_1\cdot q_2$.
The generalization of Uniswap to $N$ assets, implemented on blockchain as Balancer, is given by the potential function 
\begin{align*}
    \varphi(q_1,\cdots,q_N) = (q_1\cdots q_N)^\frac{1}{N}
\end{align*}
where $q_i$ is the quantity of asset $i$ held in reserves by the market maker.

%Owing to the equivalence of CLUMs and CPMMs, any progress we make in this paper also translates to results in DEX.

\textit{Our contributions:} Given the growing significance of these mechanisms and more recently so in DEX, it is imperative to study the computational complexity of combinatorial CLUMs, as it pertains to their practicality in supporting derivative markets.
In this paper, we prove that pricing a security according to CLUM is \#P-hard when the outcome space is Boolean combinatorial like the election example.
Section \ref{sec:cpmm_complexity} argues that the problem remains hard even in a simplified scenario where only securities of the form ``disjunctions of two negative or positive events'' are traded. 
We prove this by showing a reduction from the model counting 2-SAT problem to this problem.
In Section \ref{sec:approx_C}, we then propose an efficient algorithm for pricing securities arbitrarily close to true prices with arbitrarily high probability.
This approximation algorithm assumes access to oracle knowledge of the maximum quantity sold of any outcome in the market and the number of such max-bought outcomes.
In Section \ref{sec:examples}, we show that in %some markets where the securities traded are restricted, 
interval betting markets, 
the oracle knowledge needed in the approximation algorithm \ref{alg:cost_approx} is easily attainable.
We use an augmented self-balancing tree with lazy propagation to achieve this.
This algorithm is inspired by \citet{dudik2021logtime}, that leverages an augmented, self-balancing binary search tree to store and efficiently update key statistics of the market state. 
In Appendix \ref{sec:other_approx}, we give an alternative approximation algorithm when the ORACLE needed for Algorithm \ref{alg:cost_approx} takes unreasonable time.
Particularly when the securities traded do not fall into the aforementioned special cases.
The algorithm proposed here relies heavily on one proposed by \citet{ermon2013taming}, uses a Maximum a Posteriori (MAP) oracle, and gives a 64 factor constant approximation.
This guarantee is worse than our main algorithm, but may do well in practice.
%\item Need the oracle: discuss a bit, specific applications

%\maneesha{need more elaboration on oracles? \bo{probably not here, I think we're good}}
%\end{itemize}

%\subsection{Related Literature}\maneesha{Dont think this is needed as we cover related lit in into}

\subsection{Background}

%Due to the recent equivalence of prediction markets and CFMMs in \cite{frongillo2024axiomatic}, we convert the problem of pricing combinatorial CPMM into pricing its equivalent cost-function $C(\q)$.

Let there be $n$ binary events and let $N=2^n$ be the total possible outcomes associated with these events.
In the presidential election example, the events are whether each state votes red or blue.
There are 50 events and $2^{50}$ outcomes/possibilities.
Let us first consider the unrestricted setting corresponding to trading $N$ Arrow-Debreu securities.
Each security $i\in \{1,\cdots, N\}$ pays \$1 when the outcome $i$ happens and \$0 otherwise.
In order for the market maker to be able to accept any trade, it needs to start out with some initial reserves.
Let this be $C_0$ cash or alternately with $C_0$ shares of each security.
This is also the worst-case loss of the market maker.

Let the state of the market be defined by total quantity of securities sold so far in the market $\q' = (q_1',\dots,q_N')$ and a pricing function or cost-function $C$~\citep{chen2007utility,abernethy2013efficient}.
If the trader requests a bundle of securities $\r = \q-\q' \in \reals^N$ from this market and makes the new state of the market to be $\q$, a cost-function market maker grants this bundle for a cost of \$ $C(\q)-C(\q')$.
The Constant Log-Utility Market Maker (CLUM) has a $C(\q)$ defined as the solution to the following equation:
\begin{equation}\label{eqn:invariant}
    \frac{1}{N} \cdot \Sigma_{j=1}^N \log(C(\q)-q_j) = \log(C_0)
\end{equation}
The instantaneous price of security linked to a single outcome $j$ is given by
\begin{align*}
    p_j &= -\diffp{C}{{q_j}}
        = \frac{\frac{1}{C(\q)-q_j}}{\sum_{i=1}^N \frac{1}{C(\q)-q_i}}
\end{align*}
%\maneesha{The sign is off and I want to ignore it for now.}

Now, if we are in a more restricted market that only trades securities from the set $S$, price of a security $s\in S$ can be given as

\begin{align*}
    p_s &= \sum_{j\in s} p_j = \frac{\sum_{j\in s} \frac{1}{C(\q)-q_j}}{\sum_{i=1}^N \frac{1}{C(\q)-q_i}}\\
        &= \frac{\sum_{j\in s} \frac{1}{C(\q)-\sum_{j\in s',s'\in S}q_{s'}}}{\sum_{i=1}^N \frac{1}{C(\q)-\sum_{i\in s',s'\in S}q_{s'}}}\\
\end{align*}
%Note that the amount paid out when an outcome $j$ occurs, i.e. $q_j$, must be calculated by summing over amounts purchased of all securities $s'\in S$ that evaluate to true when $j$ occurs.
The total amount paid out, $q_j$, when outcome j occurs is calculated by summing the purchased amounts across all securities $s'\in S$ that are evaluated to true when j is the realized outcome.
%\maneesha{Explain the summation above to cause less confusion?}
%The problem with a direct implementation of CLUM according to the above equations is that it is polynomial-time in $N = 2^n$, which is exponentially large.
%We will first show that pricing is indeed computationally hard, then give an efficient approximation algorithm.

\section{Computational Complexity} \label{sec:cpmm_complexity}

%Let us consider a special case of Constant Product Market Maker, that is Uniswap.
%We first show that pricing in Uniswap markets is \#P-hard and then generalize it.
Although calculations of instantaneous price have an exponential number of terms, it may or may not translate to computational complexity.
In this section, we will quantify ``how difficult is it to compute the price'' by characterizing it into a computational complexity class.
We first show that pricing securities in CLUM markets is \#P-hard and we characterize other constant utility market makers that also have this hardness result.
We prove this hardness result by considering a restricted market setting that we describe below.

\subsection{Setting}

Consider Boolean Betting markets, in which the agent can bet on Boolean formulas of any two events or their negations.
Showing a hardness result for this sub-case will be enough to say that pricing securities in the general case is at least as hard.

\begin{itemize}
    %\item There are $n$ possible binary events, and $N$ outcomes where $N=2^n$.
    \item $n$ possible binary events represented by $\{\mathcal{A}_1,\cdots, \mathcal{A}_n\}$.
    \item $N=2^n$ possible outcomes represented by the set $\Omega$.
    \item Each security $\mathcal{F}_i$ is of the form $\mathcal{A}_{i_1}\lor\mathcal{A}_{i_2}$ or $\bar{\mathcal{A}}_{i_1}\lor\mathcal{A}_{i_2}$ or $\mathcal{A}_{i_1}\lor\bar{\mathcal{A}}_{i_2}$ or $\bar{\mathcal{A}}_{i_1}\lor\bar{\mathcal{A}}_{i_2}$, where $i_1,i_2\in \{1,\cdots,n\}, i_1\neq i_2$.
    \item An outcome $\omega\in\Omega$ satisfies formula $\mathcal{F}_i$ if it makes the formula true. For shorthand notation, let this be denoted as $\omega\in \mathcal{F}_i$.
    \item Let $k<N$ securities $\mathcal{F}_i$, for $i\in\{1,\cdots,k\}$ be purchased in quantity $q$ each\footnote{Note that this is a restricted setting of the general case where any amounts of securities can be purchased. Any complexity result shown in this special case extends to the general case.}. 
\end{itemize}
\subsection{Reduction}

Let us assume that we have access to a subroutine that computes the price of an indicator security i.e. a security that pays \$1 for a single outcome and \$0 for all other outcomes.
%We use this subroutine to find the number of solutions of a 2-SAT formula.
Consider the following algorithm that reduces the problem of 2-SAT model counting problem to pricing combinatorial CLUM using the aforementioned subroutine - \\
\qquad
\\
\textbf{Input:} 2-SAT solution finder subroutine and subroutine that calculates prices of indicator securities.
\\
\textbf{Output:} Number of solutions of 2-SAT formula $\mathcal{F}$.
\begin{algorithm}[H]
    \caption{Reduce \#2-SAT $\to$ CLUM}
    \label{alg:reduction}
    \begin{algorithmic}[1]
    \State Find assignment $\omega$ that satisfies the 2-SAT formula $\mathcal{F} = \mathcal{F}_1 \land \mathcal{F}_2 \land \cdots \land \mathcal{F}_k$. 
    \State If no solution,
    \State \quad Return 0;
    \State If there is a solution,
    \State \quad Call the subroutine for pricing indicator security $\omega$ and store output as $p_{\omega}$.   
    \State \quad Return $\lfloor\frac{1}{p_{\omega}}\rfloor$.
    \end{algorithmic}
\end{algorithm}

\textit{Explanation:} Finding one satisfying assignment of a 2-SAT formula is efficiently solvable \citep{krom1967decision}.
If the price of an indicator security was computed in polynomial time using the input subroutine, we can output the total number of 2-SAT solutions of $\mathcal{F}$ using the above reduction.
But since the 2-SAT model counting problem is \#P-hard \cite{valiant1979complexity} and the reduction itself runs in polynomial time, the subroutine to price securities is at least \#P-hard.

The final crucial step is to demonstrate that the algorithm's output, $\lfloor\frac{1}{p_{\omega}}\rfloor$, equals the number of solutions to $\mathcal{F}$.

\begin{proposition} \label{prop:solutions}
    The number of solutions of $\mathcal{F}$ is $\lfloor\frac{1}{p_{\omega}}\rfloor$.
\end{proposition}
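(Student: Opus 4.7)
The plan is to compute $1/p_{\omega}$ directly from the CLUM pricing formula, show that it decomposes as $M + \Delta$, where $M$ is the number of satisfying assignments of $\mathcal{F}$ and $\Delta$ is a sum over non-satisfying outcomes, and then argue that $\Delta \in [0,1)$ for a suitable polynomial-bit-length choice of the uniform purchase quantity $q$. Once $\Delta \in [0,1)$ is in hand, $\lfloor 1/p_{\omega} \rfloor = M$ is immediate.

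For the decomposition, note that since $\omega$ satisfies all $k$ clauses and each is purchased in quantity $q$, the total payout on $\omega$ is $q_{\omega} = qk$; more generally, any outcome $\omega'$ has payout $q_{\omega'} = q\, s(\omega')$, where $s(\omega') \in \{0,1,\ldots,k\}$ counts the clauses satisfied by $\omega'$, and $s(\omega') = k$ holds iff $\omega'$ is a solution. Writing $L := C(\q)$ and separating satisfying from non-satisfying outcomes,
$$\frac{1}{p_{\omega}} \;=\; \sum_{j=1}^N \frac{L - qk}{L - q_j} \;=\; M \;+\; \Delta, \quad \Delta \;:=\; \!\!\!\!\sum_{\omega':\, s(\omega')<k}\!\!\!\! \frac{L - qk}{L - q\, s(\omega')}.$$
Every summand in $\Delta$ lies in $(0,1)$, so $\Delta \geq 0$ is automatic; the real work is the upper bound.

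To control $\delta := L - qk > 0$, I would use the CLUM invariant~\eqref{eqn:invariant}. Grouping outcomes by their $s$-value (let $n_t$ be the number of outcomes with $s(\cdot) = t$, so $n_k = M$ and $\sum_t n_t = N$), the invariant reads $M \log \delta + \sum_{t<k} n_t \log(\delta + q(k-t)) = N \log C_0$. Lower-bounding each non-satisfying logarithm by $\log q$ (valid since $\delta \geq 0$ and $k - t \geq 1$) gives $\delta \leq C_0^{N/M}/q^{(N-M)/M}$. Combined with $L - q\, s(\omega') \geq q$ for non-satisfying $\omega'$, this yields $\Delta \leq (N-M)\, C_0^{N/M}/q^{N/M}$, which is strictly less than $1$ whenever $q > (N-M)^{M/N} C_0$. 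Since $(N-M)^{M/N} \leq N - M < 2^n$, any integer $q > 2^n C_0$ suffices, and its bit-length is polynomial in $n$ and $\log C_0$, keeping the overall reduction polynomial-time.

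The main obstacle is that the invariant is a transcendental equation coupling $\delta$ to $q$ across all $N$ terms, with no closed-form solution for $\delta$. The saving grace is that only a uniform, crude lower bound on the non-satisfying logarithms is required; a finer per-level analysis would sharpen constants but is unnecessary for the hardness conclusion. I expect the remaining work to be routine bookkeeping of $q$, $\delta$, $C_0$, and $N$, together with the edge case $M = N$ (tautology), where $\Delta$ is an empty sum and $1/p_{\omega} = N$ trivially.
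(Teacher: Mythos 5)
Your proposal is correct and shares the paper's overall strategy: the identical decomposition $1/p_{\omega} = |\mathcal{M}| + \Delta$ with $\Delta$ the sum of $\frac{C'(\q)-kq}{C'(\q)-q_j}$ over non-satisfying outcomes, and the same idea of choosing a uniform purchase quantity $q$ of polynomial bit-length so that $\Delta \in [0,1)$. Where you genuinely diverge is in how $\Delta < 1$ is forced. The paper fixes $q = C_0(2^n-1)$, lower-bounds each non-satisfying denominator by $(C'(\q)-kq)+q$ (each such outcome being covered at most $(k-1)q$ times), and then invokes its Lemma~\ref{lemma:notallC}, the crude $q$-independent bound $C'(\q)-kq < C_0$ read off from the invariant in Equation~\ref{eqn:invariant} by lower-bounding every logarithm by $\log(C'(\q)-kq)$. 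You instead lower-bound each non-satisfying denominator by $q$ alone, and extract the sharper, $q$-dependent bound $\delta \le C_0^{N/M}/q^{(N-M)/M}$ from the same invariant by lower-bounding the non-satisfying logarithms by $\log q$ rather than by $\log\delta$. Both arguments are sound. Yours buys a cleaner quantitative picture of how the slack $\delta$ shrinks as $q$ grows and works for any integer $q > 2^n C_0$ rather than one hand-picked value; it costs you the division by $M$ (harmless, since the reduction only prices $\omega$ after a satisfying assignment is found, so $M\ge 1$) and the separate treatment of $M=N$ (which in fact cannot occur for a nonempty 2-clause formula). The paper's route is marginally more elementary, needing only the one-line lemma $C'(\q)-kq<C_0$.
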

\begin{proof}
    Let $C'(\q)$ be the value of cost function after all the $k$ trades.
    Here $\q=(q_1,\cdots, q_N)$ where $q_i$ denotes the total amount of securities purchased that pay out when outcome $i$ happens.
    \begin{align*}
        \frac{1}{p_{\omega}} &= \frac{\sum_{i=1}^N \frac{1}{C'(\q)-\sum_{i\in s',s'\in S}q_{s'}}}{ \frac{1}{C'(\q)-\sum_{\omega\in s',s'\in S}q_{s'}}}\\
        &= \frac{\sum_{i=1}^N \frac{1}{C'(\q)-\sum_{i\in s',s'\in S}q_{s'}}}{\frac{1}{C'(\q)-k\cdot q}}\\
    \end{align*}
    Since $\omega\in\mathcal{F}$, we have $\omega\in\mathcal{F}_1,\cdots,\omega\in\mathcal{F}_k$ and $q$ shares each of security $\mathcal{F}_i$ were purchased.
    This makes the term $\sum_{\omega\in s',s'\in S}q_{s'} = k+\cdots+k+0+\cdots+0 = k\cdot q$.
    We can further split the above summation in the numerator into outcomes that satisfy the formula $\mathcal{F}$ and outcomes that don't.
    Let $\mathcal{M}$ be the set of outcomes that satisfy the formula $\mathcal{F}$.
    %For notational simplicity, let them be arranged before the rest.
    \begin{align*}
        \frac{1}{p_{\omega}} &= \frac{\sum_{i=1}^N \frac{1}{C'(\q)-\sum_{i\in s',s'\in S}q_{s'}}}{\frac{1}{C(\q)'-k\cdot q}}\\
        &= |\mathcal{M}| + \sum_{i=\mathcal{M}}^N \frac{C'(\q)-k\cdot q}{C'(\q)-\sum_{i\in s',s'\in S}q_{s'}}\\
    \end{align*}
\end{proof}

For the computational complexity result to transfer via Algorithm \ref{alg:reduction}, it is enough to show that any arbitrary pricing problem in CLUM has a reduction from the general model counting 2-SAT problem.
This flexibility allows for a strategic choice of the parameter $q$.
Specifically, setting $q = C_0(2^n-1)$ simplifies the subsequent proof by ensuring that the second term in the summation is strictly less than one.

\begin{align*}
    \sum_{i=\mathcal{M}}^N \frac{C'(\q)-k\cdot q}{C'(\q)-\sum_{i\in s',s'\in S}q_{s'}} & \leq 2^n\cdot\frac{C'(\q)-k\cdot q}{C'(\q)-(k-1)\cdot q} \\
    & \leq 2^n\cdot\frac{C'(\q)-k\cdot q}{C'(\q)-k\cdot q + C_0(2^n-1)} \\
    & \leq 2^n\cdot\frac{1}{1 + \frac{C_0(2^n-1)}{C'(\q)-k\cdot q}} \\
    & < 2^n\cdot\frac{1}{1 + \frac{C_0(2^n-1)}{C_0}} \text{\qquad  from Lemma 
    \ref{lemma:notallC}}\\
    & < 2^n\cdot\frac{1}{2^n}  < 1
\end{align*}

The first step above is the result of the fact that every outcome $i\notin \mathcal{M}$ is purchased at most $(k-1)q$ times and there are at most $N-|\mathcal{M}|\leq 2^n$ such terms.

\begin{lemma} \label{lemma:notallC}
    $C'(\q) - k\cdot q < C_0$.
\end{lemma}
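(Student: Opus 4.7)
The plan is to apply the CLUM invariant (Equation~\ref{eqn:invariant}) together with the strict monotonicity of $\log$. After the $k$ trades, the payout on any outcome $j$ is $q_j = q\cdot|\{i : j \in \mathcal{F}_i\}| \leq k \cdot q$, with equality precisely for outcomes in $\mathcal{M}$. Since the reduction only reaches the pricing subroutine when a satisfying assignment $\omega$ has been found (Steps~4--6 of Algorithm~\ref{alg:reduction}), the set $\mathcal{M}$ is nonempty, so $q_{\max} = k \cdot q$. The invariant then forces $C'(\q) > k \cdot q$ (otherwise $\log(C'(\q) - q_{\max})$ would be undefined, contradicting Equation~\ref{eqn:invariant}). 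Consequently, for every $j$,
\begin{equation*}
\log(C'(\q) - q_j) \;\geq\; \log(C'(\q) - k \cdot q).
\end{equation*}

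The second step is to exhibit a single outcome $j_0$ for which this inequality is strict. Each security $\mathcal{F}_i$ is a disjunction of two literals on distinct variables $\mathcal{A}_{i_1}, \mathcal{A}_{i_2}$ (with $i_1 \neq i_2$), and hence is not a tautology: the assignment setting both literals to false falsifies it. Thus there exists at least one outcome $j_0 \notin \mathcal{F}_1$, giving $q_{j_0} \leq (k-1)q < k\cdot q$, and so
\begin{equation*}
\log(C'(\q) - q_{j_0}) \;>\; \log(C'(\q) - k \cdot q).
\end{equation*}

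Averaging the pointwise inequality over all $j$ and invoking the invariant yields
\begin{align*}
\log(C_0) \;=\; \frac{1}{N}\sum_{j=1}^{N} \log(C'(\q) - q_j) \;>\; \log(C'(\q) - k \cdot q),
\end{align*}
and exponentiating gives $C_0 > C'(\q) - k \cdot q$, as required. The main care-point is verifying that $C'(\q) - k \cdot q$ lies in the domain of $\log$; this is exactly what the non-emptiness of $\mathcal{M}$ plus the invariant provides. Beyond that, the argument reduces to a one-line ``strict Jensen'' observation: a pointwise inequality that holds everywhere with non-strict sign, and strictly at one point, becomes strict after averaging.
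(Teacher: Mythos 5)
Your proof is correct and follows essentially the same route as the paper's: compare the average of $\log(C'(\q)-q_j)$ with $\log(C'(\q)-k\cdot q)$ using $q_j\leq k\cdot q$ everywhere and strict inequality for at least one outcome, then invoke the invariant. The only differences are presentational --- the paper phrases it as a contradiction (which also supplies the domain condition $C'(\q)-k\cdot q>0$ that you instead get from nonemptiness of $\mathcal{M}$), and your justification for the strict inequality (a 2-clause is never a tautology) is cleaner than the paper's appeal to $k<N$.
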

\begin{proof}
    Assume the contrary that $C'(\q) - k\cdot q \geq C_0$.
    From the Constant Logarithmic Utility Market Maker invariance condition given in eq \ref{eqn:invariant} we have, 
    \begin{align*}
        \sum_{j=1}^N \log C_0 &= \sum_{j=1}^N \log (C'(\q)-q_j)\\
        & > \sum_{j=1}^N \log (C'(\q)-k\cdot q) \\
        C_0 &> C'(\q)-k\cdot q\\
    \end{align*}
    This is a contradiction to our assumption. The penultimate step follows from the fact that $q_j\leq k\cdot q , \forall j$ and $q_j < k\cdot q$ for at least one $j$ as we assume that $k < N$.
\end{proof}
\subsection{General complexity result}
\label{sec: general_complexity}
Here we aim to characterize other Constant Utility Market makers for which pricing a security is \#P-hard.

\begin{proposition}
    If a Constant Utility Market Maker characterized by utility function $f(\q)$ satisfies the following conditions, it is \#P-hard to price outcomes and securities in that combinatorial market.
    \begin{enumerate}
        \item $f(\q)$ is additively separable and symmetric w.r.t. outcomes\maneesha{right wording?} i.e. $f(\q)=\sum_{i=1}^N \varphi(q_i)$.
        \item $\varphi$ is strictly concave and increasing function. \maneesha{Common for CFMMs}
        \item $(\varphi')^{-1}(x*y)=(\varphi')^{-1}(x)*(\varphi')^{-1}(y)$ or alternatively , $(\varphi')^{-1}(x) = x^{\ln((\varphi')^{-1}(e))}$ where $\varphi' = \partial\varphi$
        \item $(\varphi')^{-1}(\frac{1}{2^n}) > 1$.
    \end{enumerate}
\end{proposition}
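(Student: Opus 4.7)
The plan is to mimic the proof of Proposition \ref{prop:solutions} essentially verbatim, with $\log$ replaced by the general $\varphi$ of condition 1, and to verify that each step still goes through. First, I would derive the general instantaneous price formula. From the natural generalization of equation \ref{eqn:invariant},
\begin{align*}
    \sum_j \varphi\bigl(C(\q) - q_j\bigr) = N\,\varphi(C_0),
\end{align*}
implicit differentiation in $q_j$ yields
\begin{align*}
    p_j = \frac{\varphi'\bigl(C(\q) - q_j\bigr)}{\sum_i \varphi'\bigl(C(\q) - q_i\bigr)},
\end{align*}
with $\varphi' > 0$ and $\varphi'$ strictly decreasing by condition 2. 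Next, I would generalize Lemma \ref{lemma:notallC} by exactly the same contradiction: if $C'(\q) - kq \geq C_0$, then since $\varphi$ is increasing and $q_j \leq kq$ for all $j$ (with strict inequality for at least one $j$ whenever $k < N$), the invariant would force $\sum_j \varphi(C'(\q) - q_j) > N\varphi(C_0)$.

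I would then reuse Algorithm \ref{alg:reduction} with a suitably chosen quantity $q$, splitting the reciprocal price exactly as in the CLUM argument:
\begin{align*}
    \frac{1}{p_\omega} = |\mathcal{M}| + \sum_{i \notin \mathcal{M}} \frac{\varphi'\bigl(C'(\q) - q_i\bigr)}{\varphi'\bigl(C'(\q) - kq\bigr)}.
\end{align*}
Since $q_i \leq (k-1)q$ for each $i \notin \mathcal{M}$ and there are at most $2^n$ such terms, it suffices to bound a single ratio,
\begin{align*}
    \frac{\varphi'\bigl(C'(\q) - (k-1)q\bigr)}{\varphi'\bigl(C'(\q) - kq\bigr)} < \frac{1}{2^n}.
\end{align*}
Applying $(\varphi')^{-1}$ to both sides (which reverses the inequality, since $(\varphi')^{-1}$ is decreasing) and using the multiplicativity guaranteed by condition 3, this becomes
\begin{align*}
    \frac{C'(\q) - (k-1)q}{C'(\q) - kq} > (\varphi')^{-1}\!\left(\frac{1}{2^n}\right).
\end{align*}

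The main obstacle is translating this into an explicit, polynomial-bit choice of $q$ so that the reduction remains polynomial time. Letting $\alpha = (\varphi')^{-1}(1/2^n)$, which exceeds $1$ by condition 4, the inequality rearranges to $q > (\alpha - 1)(C'(\q) - kq)$; by the generalized version of Lemma \ref{lemma:notallC} this is implied by the simpler condition $q > (\alpha - 1)C_0$, so any such $q$ works. One must also check that $\alpha$ admits a polynomial-length encoding, which follows directly from the explicit power-law form $(\varphi')^{-1}(x) = x^{\ln((\varphi')^{-1}(e))}$ in condition 3. The essential role of condition 3 in the proof is this: it is what collapses a ratio of marginal utilities into a ratio of their arguments, thereby converting mere concavity of $\varphi$ into the multiplicative $1/2^n$ gap needed to ensure $\lfloor 1/p_\omega \rfloor = |\mathcal{M}|$; conditions 2 and 4 then supply, respectively, the monotonicity that drives the generalized lemma and the strict positivity of the gap $\alpha - 1$.
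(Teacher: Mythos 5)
Your proposal is correct and follows essentially the same route as the paper: the same reduction, the same split of $1/p_\omega$ into $|\mathcal{M}|$ plus a remainder bounded by $2^n$ times a single ratio of marginal utilities, the same generalized version of Lemma \ref{lemma:notallC}, and effectively the same choice $q = C_0\bigl((\varphi')^{-1}(1/2^n)-1\bigr)$ (the paper applies the multiplicativity of $\varphi'$ forward where you invert it, which is equivalent). Your added observation about the polynomial-length encoding of $(\varphi')^{-1}(1/2^n)$ is a reasonable extra check the paper leaves implicit, but it does not change the argument.
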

\begin{proof}
    Reduction algorithm is similar to the one proposed in algorithm \ref{alg:reduction}. However, we still have to prove that its output corresponds to number of solutions of $\mathcal{F}$. Let $q_i$ be the total quantity of shares of outcome $i$ purchased and $C$ be the cost after $k$ transactions.
    Price of an outcome is given by -
    \begin{align*}
        p_i &= \frac{\partial f}{\partial q_i} = \frac{\frac{\partial \varphi}{\partial q_i}}{\frac{\partial \varphi}{\partial C}}\\
        &= \frac{\varphi'(C-q_i)}{\sum_{\omega} \varphi'(C-q_{\omega})}\\
        \frac{1}{p_i} &= \sum_{\omega}\frac{ \varphi'(C-q_{\omega})}{\varphi'(C-q_i)}\\
    \end{align*}
    \maneesha{can use some eyes on signs above}
    
    If $i$ is a satisfying assignment for the 2-SAT formula and there are $|\mathcal{M}|$ solutions,
    \begin{align*}
        \frac{1}{p_i} &= |\mathcal{M}| + \sum_{\omega \notin \mathcal{M}}\frac{ \varphi'(C-q_{\omega})}{\varphi'(C-kq)}\\
    \end{align*}
    Maximum of the last term is 
    \begin{equation} \label{eq:1}
        \sum_{\omega \notin \mathcal{M}}\frac{ \varphi'(C-q_{\omega})}{\varphi'(C-kq)} < 2^n\frac{\varphi'(C-(k-1)q)}{\varphi'(C-kq)}\\
    \end{equation}
    We use assumption 2 in the proposition that $\varphi$ is strictly concave aka $\varphi'$ is strictly decreasing.
    Strict inequality as we do this only when there is at least one solution to 2-SAT.

    %Now assume $y=C-kq=\frac{C_0}{k(\varphi')^{-1}(1/2^n)+1-k}$. Note that $0<C-kq<C_0$ as needed from assumption 4 of the proposition. Also note that $C\geq C_0$.
    
    %\begin{align*}
    %    y* (k(\varphi')^{-1}(1/2^n)+1-k) &= C_0 \leq C \\
    %    \frac{(\frac{(k-1)y+C}{k})}{y} &
    %    \geq (\varphi')^{-1}(1/2^n)\\
    %    \varphi'\left(\frac{\frac{(k-1)y+C}{k}}{y}\right) &\leq 1/2^n\\
    %    \frac{\varphi'(\frac{(k-1)y+C}{k})}{\varphi'(y)} &\leq 1/2^n\\
    %    \frac{\varphi'(C-(k-1)q)}{\varphi'(C-kq)} &\leq 1/2^n\\
    %\end{align*}
    %\end{comment}

    Now assume that $q=C_0((\varphi')^{-1}(1/2^n)-1)$, which is non-negative from assumption 4 of the proposition.

    \begin{align*}
        \frac{\varphi'(C-(k-1)q)}{\varphi'(C-kq)} &= \varphi'(\frac{C-(k-1)q}{C-kq})\\
        &\leq \varphi'( 1+ \frac{q}{C-kq})\\
        &< \varphi'(1+\frac{C_0((\varphi')^{-1}(1/2^n)-1)}{C_0})\\
        &< \varphi'((\varphi')^{-1}(1/2^n)) = \frac{1}{2^n}\\
    \end{align*}
    
    First step follows from assumption 3 of the proposition.
    And penultimate step is from the assumption $\varphi'$ is strictly decreasing and the fact that $C-kq < C_0$.
    Putting this back in Equation \ref{eq:1}, we have that 
    \begin{align*}
         \sum_{\omega \notin \mathcal{M}}\frac{ \varphi'(C-q_{\omega})}{\varphi'(C-kq)} < 1
    \end{align*}
    And hence we can write $|\mathcal{M}| = \lfloor\frac{1}{p_\omega}\rfloor$. Rest of the proof continues as in proposition \ref{prop:solutions}.
\end{proof}

\begin{section}{Approximating $C(\q)$}
\label{sec:approx_C}
Given that we proved pricing in this market is hard, the natural question to ask is if we can approximate the instantaneous price or alternately, value of the cost-function in a reasonable way.
%In this subsection, we look at better estimating the price and $C(\q)$ using equation \ref{eqn:invariant}.
We make a few additional assumptions that are crucial for the approximation algorithm but are also reasonable. 
\begin{itemize}
    \item Let $q_{max}$, the maximum quantity purchased of any outcome, and the number of outcomes that have $q_{max}$ shares purchased can be computed by querying an ORACLE.
    \item We assume that shares of securities can only be purchased in integral quantities.
\end{itemize}

Although it is difficult to prove that every security market will have a reasonable oracle, the problem of finding $q_{max}$ exhibits a much more structure than pricing any security.
Here we give a few such examples and we dedicate the next section to explore one such market in depth.
\begin{itemize}
    \item If the securities are of the form disjunctions of two events or their negations, the oracle we need would be a weighted MAX-SAT oracle. There are several weighted MAX-SAT oracles that work very well in practice \cite{argelich2008first}.
    \item It is also easily computable when for example the market is trading only securities of one event or a disjunction of two events.
    \item We will show in Section \ref{sec:examples} that a polynomial runtime oracle can be achieved for interval betting.
\end{itemize}

To recap, for any prespecified $\q$, we are trying to estimate $C(\q)$ that satisfies equation \ref{eqn:invariant} restated below
\begin{equation*}
    \frac{1}{N} \cdot \Sigma_{j=1}^N \log(C(\q)-q_j) = \log(C_0) .
\end{equation*}
First, it will be useful to define the following.
\begin{itemize}
    \item $s_{qmax} = |\{ j : q_j = q_{\max}\}|$, the number of outcomes for which a maximum number of securities are purchased.
    \item $U_1(c) = \frac{s_{qmax}}{N} \log(c - q_{max})$.
    \item $U_2(c) = \frac{1}{N} \sum_{j : q_j < q_{max}} \log(c - q_j)$
    \item $U(c) = U_1(c) + U_2(c)$. Note that the above equality can we rewritten as $U(C(\q))=\log C_0$.
\end{itemize}
%We need to solve for $c$ such that $U(c) = \log(C_0)$.

\begin{proposition}\label{prop:Cbounds}
    The solution $C(\q)$ to Equation \ref{eqn:invariant} satisfies \\
    $\max\{C_0, q_{max}\} \leq C(\q) \leq q_{max} + C_0$.
\end{proposition}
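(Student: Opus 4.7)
The plan is to exponentiate the CLUM invariant (equation \ref{eqn:invariant}) so that the arithmetic mean of logarithms becomes a geometric mean of the factors $C(\q)-q_j$, and then extract both bounds by comparing this geometric mean with its extremal terms. Concretely, multiplying equation \ref{eqn:invariant} by $N$ and exponentiating yields the equivalent identity
\begin{equation*}
    \prod_{j=1}^N \bigl(C(\q)-q_j\bigr) \;=\; C_0^N.
\end{equation*}
Because each $q_j \geq 0$ represents a total quantity purchased, every factor lies in the interval $\bigl(C(\q)-q_{max},\; C(\q)\bigr]$, and this is the only structural fact the proof needs.

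For the lower bound I would argue in two short steps. First, for the logs in equation \ref{eqn:invariant} to be defined, the argument $C(\q)-q_j$ must be strictly positive for each $j$; taking an index that attains $q_{max}$ gives $C(\q) > q_{max}$, which certainly implies $C(\q) \geq q_{max}$. Second, since $C(\q)-q_j \leq C(\q)$ for every $j$, the product identity forces
\begin{equation*}
    C_0^N \;=\; \prod_{j=1}^N \bigl(C(\q)-q_j\bigr) \;\leq\; C(\q)^N,
\end{equation*}
so $C(\q) \geq C_0$. Combining the two gives $C(\q) \geq \max\{C_0,q_{max}\}$.

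For the upper bound I would use the opposite endpoint of the interval: the smallest of the factors is $C(\q)-q_{max}$, so
\begin{equation*}
    \bigl(C(\q)-q_{max}\bigr)^N \;\leq\; \prod_{j=1}^N \bigl(C(\q)-q_j\bigr) \;=\; C_0^N,
\end{equation*}
which rearranges to $C(\q) \leq q_{max}+C_0$. There is no real obstacle here: the entire content of the proposition is just that the geometric mean of the $N$ quantities $C(\q)-q_j$ equals $C_0$, and this mean is sandwiched between its minimum $C(\q)-q_{max}$ and its maximum $C(\q)$. The only hypothesis worth flagging is nonnegativity of the $q_j$, which is immediate from the paper's setup where $q_j$ denotes a cumulative amount purchased.
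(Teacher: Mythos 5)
Your proof is correct and is essentially the paper's own argument in multiplicative form: exponentiating the invariant turns the paper's comparison of the average of $\log(C(\q)-q_j)$ against its extremal terms into the identical comparison of the product $\prod_j(C(\q)-q_j)=C_0^N$ against $\bigl(C(\q)-q_{max}\bigr)^N$ and $C(\q)^N$. (If anything, your version states the lower-bound inequality in the right direction, whereas the paper's line ``$\log C(\q)\leq U(C(\q))$'' has the inequality reversed as a typo.)
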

\begin{proof}
    $U(c)$ is a monotone increasing function in $c$.
    As $c \to q_{\max}$ from above, $U(c) \to -\infty$, hence $C(\q)$, the solution of equation \ref{eqn:invariant}, needs to be greater than $q_{\max}$.
    Each $q_j \geq 0, \forall j$ and since $U(C(\q))$ is an average of terms of the form $\log(C(\q) - q_j)$, we get $\log C(\q)\leq U(C(\q)) = \log(C_0)$. 
    This implies $C(\q) \geq C_0$.
    On the other hand, assuming $C(\q) > q_{max} + C_0$ implies that $\log (C(\q)-q_j) > \log (C(\q)-q_{max}) > \log C_0$.
    This makes equation \ref{eqn:invariant} fail and hence renders our initial assumption of $C(\q)>q_{max}+C_0$ untrue.
\end{proof}

We are now ready to propose our approximation algorithm \ref{alg:cost_approx}.\\

\textit{Explanation:}
The core of the algorithm is running binary search to find $C(\q)$ that satisfies $U(C(\q))=\log C_0$.
The search starts with known bounds of $C(\q)$ given by proposition \ref{prop:Cbounds}.
For each potential candidate $\hat{C}^t$, we can precisely calculate $U_1(\hat{C}^t)$ using ORACLE.
But $U_2(\hat{C}^t)$ cannot be calculated due to it potentially having exponential number of terms.
Hence, it is estimated by $\hat{U}_2(\hat{C}^t)$ using a sampling Algorithm \ref{alg:U2_approx} with guarantees we will prove in lemma \ref{lemma:U2hat-accurate}.
Comparison of $U_1(\hat{C}^t)+\hat{U}_2(\hat{C}^t)$ with $\log C_0$ informs us of which direction to proceed in the next iteration of the binary search.
Note that since these terms are estimates, it is possible to miss the correct $C(\q)$ and hence more rigorous theoretical guarantees need to be proven.
Theorem \ref{thm:correctC} proves that we get a good multiplicative approximation of $C(\q)$ with high probability and before we prove it, we will first prove the lemmas \ref{lemma:Trounds},\ref{lemma:U2hat-accurate}. 

\begin{algorithm}[H]
    \caption{Approximately solve for $C(\q)$ in $U(C(\q)) = \log(C_0)$.}
    \label{alg:cost_approx}
    \begin{algorithmic}[1]
    \State Choose error parameter $\epsilon$ and a confidence level $\delta$.
    \State Compute $q_{max},s_{qmax}$ using ORACLE.
    \State Let $a^1=\max \{q_{max},C_0\},b^1=C_0+q_{max}$. 
    \State Let $t=1$.
    \State While $b^t/a^t > e^{\epsilon}$:
    \State \quad Let $\hat{C}^{t} = \frac{a^t + b^t}{2}$.
    \State \quad Let $U_1^t = \frac{1}{N}\cdot s_{qmax} \cdot \log(\hat{C}^t - q_{max})$.
    \State \quad Get $\hat{U}_2^t \leftarrow $ Algorithm \ref{alg:U2_approx}.
    \State \quad Let $\hat{U}^t = U_1^t + \hat{U_2^t}$.
    \State \quad If $\hat{U^t} > \log(C_0)+\epsilon$ then:
    \State \quad\quad Let $a^{t+1} = a^t$.
    \State \quad\quad Let $b^{t+1} = \hat{C}^t$.
    \State \quad Else If $\hat{U}^t\leq \log C_0 -\epsilon$ then:
    \State \quad\quad Let $a^{t+1} = \hat{C}^t$.
    \State \quad\quad Let $b^{t+1} = b^t$.
    \State \quad Else: \label{line:else}
    \State \quad\quad Break. \maneesha{other option is to do nothing}
    \State \quad Set $t = t + 1$.
    \State Return $\hat{C}^t$.
  \end{algorithmic}
\end{algorithm}
\begin{algorithm}[H]
    \caption{Approximating $U_2(\hat{C}^t)$}
    \label{alg:U2_approx}
    \begin{algorithmic}[1]
    \State Given error,confidence parameters $\epsilon,\delta$, $q_{max} $,$s_{qmax}$, $\hat{C}^t$.
    \State Let $X_i=\log (\hat{C}^t-q_i)$.
    \State Let $L = \max\{0, \log(C_0 + q_{max})\}$.
    \State Let $T = \lceil \log_2(1/\epsilon) \rceil$.
    \State Set $m=\frac{T L^2 \cdot\log(2/\delta)}{2\epsilon^2}$ 
    \State Sample $m$ $X_i$s i.i.d. as long as $X_i\neq \log(\hat{C}-q_{max})$.
    \State Compute and return $\hat{U_2^t} =\frac{N-s_{qmax}}{N}\cdot\frac{\sum_{i=1}^m \hat{X}_i}{m}$.
    \end{algorithmic}
\end{algorithm}

\begin{lemma} \label{lemma:Trounds}
    Algorithm \ref{alg:cost_approx} terminates after at most $T = \lceil \log_2(1/\epsilon) \rceil$ iterations of the while loop.
\end{lemma}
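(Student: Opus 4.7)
The plan is to convert the additive shrinking of the bracket $[a^t,b^t]$ (which is immediate from binary search) into a multiplicative bound on the ratio $b^t/a^t$ (which is what the termination condition actually tests). The key quantitative observations I would establish and chain together are: (i) the initial bracket has width $b^1 - a^1 = \min\{C_0,q_{\max}\}$, which is at most $a^1 = \max\{C_0,q_{\max}\}$, so $b^1/a^1 \leq 2$; (ii) in both branches of the update, $\hat C^t$ is the midpoint, so $b^{t+1}-a^{t+1} = (b^t-a^t)/2$; and (iii) the lower endpoint is monotone, $a^{t+1}\geq a^t$.

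Chaining these, after $t-1$ executions of the while body,
\begin{equation*}
b^t - a^t \;\leq\; \frac{b^1-a^1}{2^{t-1}} \;\leq\; \frac{a^1}{2^{t-1}} \;\leq\; \frac{a^t}{2^{t-1}},
\end{equation*}
so $b^t/a^t \leq 1 + 2^{-(t-1)}$. I would then use the elementary inequality $1+x\leq e^x$ to get $b^t/a^t \leq e^{2^{-(t-1)}}$. Once $2^{-(t-1)} \leq \epsilon$, i.e.\ once $t \geq 1 + \log_2(1/\epsilon)$, the termination test $b^t/a^t \leq e^\epsilon$ is satisfied and the loop exits. Since the test is evaluated at the top of the $t$-th iteration, the body is executed at most $\lceil \log_2(1/\epsilon)\rceil$ times. (The explicit \textbf{Break} on line~\ref{line:else} can only cause the algorithm to stop sooner, so it does not affect the upper bound.)

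The only subtle step is (i)–(iii): the midpoint update and the monotonicity of $a^t$ are direct from reading off the two conditional branches, but the fact that the initial ratio is bounded by $2$ really does require the specific choice $a^1 = \max\{C_0,q_{\max}\}$ and $b^1 = C_0 + q_{\max}$, and without it the multiplicative bound would pick up an extra constant. I expect this to be the main (mild) obstacle: one must be careful not to confuse the additive and multiplicative regimes, and to justify the bound $a^1 \geq \min\{C_0,q_{\max}\}$ that lets the additive halving translate cleanly into the desired $\log_2(1/\epsilon)$ iteration count rather than something weaker like $\log_2(q_{\max}/\epsilon)$.
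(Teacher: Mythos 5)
Your proposal is correct and follows essentially the same route as the paper's proof: bound the initial ratio $b^1/a^1\leq 2$ via the specific choice of endpoints, use the halving of the additive width $b^t-a^t$ together with the monotonicity of $a^t$ to get $b^t/a^t\leq 1+2^{-(t-1)}$, and finish with $1+x\leq e^x$. Your bookkeeping of the iteration index is in fact slightly cleaner than the paper's (which has a harmless off-by-one in writing $\ell^T=\ell^1/2^T$), and your explicit note that the early \textbf{Break} only helps is a point the paper leaves implicit.
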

\begin{proof}
    For $t=1$, we have $\frac{b^t}{a^t} = \frac{C_0 + q_{max}}{\max\{C_0,q_{max}\}} \leq 2$.
    Define the interval length as $\ell^t = b^t - a^t$.
    Binary search ensures that interval size is halved every round, i.e. $\ell^{t+1} = \frac{1}{2}\ell^t$.
    After $T = \lceil \log_2(1/\epsilon) \rceil$ rounds, we have $\ell^T = b^T-a^T = \frac{1}{2^T} \ell^1 \leq \epsilon \cdot \ell^1 = \epsilon \cdot \min\{C_0,q_{max}\}$.
    Therefore,
    \begin{align*}
        \frac{b^T}{a^T} &\leq \frac{a^T + \min\{C_0,q_{max}\}\epsilon}{a^T}  \\
          &= 1 + \frac{\min\{C_0,q_{max}\}\epsilon}{a^T}   \\
          &\leq 1 + \frac{\min\{C_0,q_{max}\}\epsilon}{\max\{C_0,q_{max}\}}  \\
          &\leq 1 + \epsilon  \leq e^{\epsilon} .
    \end{align*}
\end{proof}

For the remainder of the analysis, for simplicity, we use the following notations.
\begin{itemize}
    \item $U_2^t = U_2(\hat{C}^t)$.
    \item $U^t = U_1^t + U_2^t$.
\end{itemize}

\begin{lemma} \label{lemma:U2hat-accurate}
    With probability at least $1-\delta$, we have for all $t \geq 1$, $|\hat{U}_2^t - U_2^t| \leq \epsilon$ and $C(\q)\in [a^t, b^t]$.
\end{lemma}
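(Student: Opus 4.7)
The plan is to prove the two assertions by combining (i) a standard concentration argument for $\hat U_2^t$ in each fixed iteration, (ii) a union bound over the at most $T$ iterations guaranteed by Lemma \ref{lemma:Trounds}, and (iii) an induction on $t$ that uses monotonicity of $U$ to show the invariant $C(\q)\in[a^t,b^t]$ is preserved as long as the concentration event holds. So the whole lemma is proved conditional on a single ``good event'' that $|\hat U_2^t-U_2^t|\le\epsilon$ for every round, and the job is to show that this good event has probability at least $1-\delta$.

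First I would pin down the range of the sampled random variables $X_i=\log(\hat C^t-q_i)$ drawn conditional on $q_i<q_{max}$. Using the integrality assumption, any sampled outcome satisfies $q_i\le q_{max}-1$, and the binary search invariant gives $\hat C^t\ge a^t\ge a^1=\max\{C_0,q_{max}\}\ge q_{max}$, hence $\hat C^t-q_i\ge 1$, so $X_i\ge 0$. On the other side, $\hat C^t\le b^t\le C_0+q_{max}$ gives $X_i\le\log(C_0+q_{max})\le L$. Thus each $X_i\in[0,L]$, and since $U_2^t=\tfrac{N-s_{qmax}}{N}\,\E[X_i]$ while $\hat U_2^t=\tfrac{N-s_{qmax}}{N}\cdot\tfrac{1}{m}\sum_i X_i$, Hoeffding's inequality yields
\[
\Pr\bigl[|\hat U_2^t-U_2^t|>\epsilon\bigr]\le \Pr\Bigl[\bigl|\tfrac{1}{m}\textstyle\sum X_i-\E X_i\bigr|>\epsilon\Bigr]\le 2\exp\!\bigl(-2m\epsilon^2/L^2\bigr).
\]
Plugging in $m=TL^2\log(2/\delta)/(2\epsilon^2)$ makes the right-hand side at most $2\exp(-T\log(2/\delta))\le\delta/T$ for $T\ge 1$ (since $T\log(2/\delta)\ge\log(2T/\delta)$ whenever $\delta\le 1$). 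A union bound over the at most $T$ rounds gives that $|\hat U_2^t-U_2^t|\le\epsilon$ for every $t$ with probability at least $1-\delta$.

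Conditioning on this good event, I prove $C(\q)\in[a^t,b^t]$ by induction. The base case $t=1$ is Proposition \ref{prop:Cbounds}. For the inductive step, suppose $C(\q)\in[a^t,b^t]$ and consider which branch the algorithm takes. If $\hat U^t>\log C_0+\epsilon$, then $U^t=U_1^t+U_2^t\ge\hat U^t-\epsilon>\log C_0=U(C(\q))$, and since $U$ is strictly increasing (as used in Proposition \ref{prop:Cbounds}), $\hat C^t>C(\q)$, so $C(\q)\in[a^t,\hat C^t]=[a^{t+1},b^{t+1}]$. The case $\hat U^t\le\log C_0-\epsilon$ is symmetric: $U^t\le\hat U^t+\epsilon\le\log C_0$ forces $\hat C^t\le C(\q)$, so $C(\q)\in[\hat C^t,b^t]=[a^{t+1},b^{t+1}]$. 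In the remaining ``else'' branch (line \ref{line:else}) the algorithm breaks without updating, so the invariant trivially persists for the final returned index.

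The only step that is more than routine is the range bound $X_i\in[0,L]$: the lower bound $X_i\ge 0$ relies on the integrality of share quantities together with $\hat C^t\ge q_{max}$, which itself needs a small side observation that $a^t$ is non-decreasing across rounds (immediate from the update rules, since $a^{t+1}\in\{a^t,\hat C^t\}$ and $\hat C^t\ge a^t$). Once that is noted, Hoeffding with the clean range $[0,L]$ drives the whole argument, and the rest of the proof is structural.
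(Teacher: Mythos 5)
Your proposal is correct and follows essentially the same route as the paper's proof: bounding the samples in $[0,L]$ via integrality and the binary-search invariants, applying Hoeffding plus a union bound over the $T$ rounds, and then an induction using monotonicity of $U$ to preserve $C(\q)\in[a^t,b^t]$. If anything, you are slightly more careful than the paper in verifying that the chosen $m$ actually yields the $\delta/T$ per-round failure probability and in noting that the samples are drawn conditional on $q_i<q_{max}$.
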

\begin{proof}
    Let us first prove that the boundaries of samples $X_j$ drawn in the Algorithm \ref{alg:U2_approx} is $ [0,L]$ for all $j$, where $L=\max\{0, \log(C_0 + q_{max})\}$.
    %We argue that $X_j \in [0,L]$ for all $j$ in the sample and use a Hoeffding bound.
    From Algorithm \ref{alg:cost_approx},
    we have $\hat{C}^t \geq a^t \geq a^1 \geq q_{max}$.
    For any $j$ such that the quantity of security $j$ purchased $q_j \neq q_{max}$, $X_j = \log(\hat{C}^t - q_j) \geq \log(q_{max} - q_j) \geq \log(1) = 0$.
    This uses our assumption that shares are purchased in integer quantities.
    To prove the upper bound of $X_j$, consider the upper bound $\hat{C}^t \leq b^1 \leq C_0 + q_{max}$.
    Using this in the definition of $X_j$,
    $X_j = \log(\hat{C}^t - q_j) \leq \log(\hat{C}^t) \leq \log(C_0 + q_{max}) = L$.
    This established that $X_j \in [0,L]$.

    This boundedness property of $X_j$s enables us to apply Hoeffding's inequality,
    \begin{align*}
        \Pr[|U_2^t - \hat{U}_2^t| > \epsilon]
        &\leq 2e^\frac{-2m\epsilon^2}{L^2}  \leq \frac{\delta}{T} .
    \end{align*}
    By union bound over the at most $T$ rounds of the binary search, the probability of any round having greater than $\epsilon$ error is at most $\delta$.

    We now prove the claim that the true cost $C(\q)$ remains contained within the current interval, $C(\q)\in [a^t, b^t]$, using induction. 
    The base case, $t=1$, is true by the proposition \ref{prop:Cbounds}.
    Fix a time step $t$ and assume that it is true for $t-1$.
    Given that the above proven result holds for all time steps, $U_2^{t-1} - \hat{U}_2^{t-1}$ is bounded with probability $1-\delta$. 
    If the condition $ \hat{U}_2^{t-1} > \log C_0 +\epsilon$ is met in the algorithm, then $U^{t-1}>\log C_0$.
    This means that $\hat{C}^t = \frac{a^{t-1}+b^{t-1}}{2}>C(\q)$ and $C(\q) \in [a^{t-1},\frac{a^{t-1}+b^{t-1}}{2}] = [a^t,b^t]$.
    A similar argument can be made for when $ \hat{U}_2^{t-1} \leq \log C_0 -\epsilon$.
    The recursive step holds and hence concludes the proof.
\end{proof}
\begin{theorem}\label{thm:correctC}
    For a given $C_0,\q,\epsilon$, with probability of at least $1-\delta$, the outcome of Algorithm \ref{alg:cost_approx}, i.e. estimate $\hat{C}^T$ of $C(\q)$, satisfies 
    \begin{align*}
        \frac{C(\q)}{1+2\epsilon}&\leq \hat{C}^T\leq C(\q)(1+2\epsilon)
    \end{align*}
\end{theorem}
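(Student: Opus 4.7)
The plan is to combine Lemma \ref{lemma:Trounds}, which bounds the width of the binary-search interval at termination, with Lemma \ref{lemma:U2hat-accurate}, which certifies (with probability at least $1-\delta$) that $C(\q)$ is trapped inside that interval throughout the run. First I would condition on the $(1-\delta)$-probability event of Lemma \ref{lemma:U2hat-accurate}; under this event, for every iteration $t$ reached by the algorithm we have both $C(\q)\in[a^t,b^t]$ and $|\hat{U}_2^t - U_2^t|\leq\epsilon$. All subsequent arguments are carried out inside this event.

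In the main termination case (the while loop exits because its guard fails), Lemma \ref{lemma:Trounds} gives $b^T/a^T \leq e^\epsilon$. By construction $\hat{C}^T = (a^T+b^T)/2 \in [a^T,b^T]$, and $C(\q)\in[a^T,b^T]$ by the conditioning, so both $\hat{C}^T$ and $C(\q)$ lie in a common interval of multiplicative width $e^\epsilon$. This immediately yields
\[
  e^{-\epsilon} \;\leq\; \frac{\hat{C}^T}{C(\q)} \;\leq\; e^\epsilon.
\]
The theorem in this case follows from the elementary inequalities $e^\epsilon \leq 1+2\epsilon$ and $e^{-\epsilon}\geq 1/(1+2\epsilon)$, valid on the relevant range (say $\epsilon\in(0,1]$).

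In the secondary termination case, where the algorithm exits via the Break on line \ref{line:else}, the stopping test gives $|\hat{U}^t - \log C_0|\leq \epsilon$, which combines with the conditioning to yield $|U(\hat{C}^t) - \log C_0|\leq 2\epsilon$. Writing $\hat{C}^t = C(\q)(1+x)$ and expanding $U(c(1+x)) - U(c) = \frac{1}{N}\sum_j \log\left(1 + \frac{cx}{c - q_j}\right)$ from the definition of $U$, one uses $c/(c-q_j)\geq 1$ (because $q_j\geq 0$) to obtain $|U(\hat{C}^t) - U(C(\q))|\geq|\log(1+x)|$, so $|x|\leq e^{2\epsilon}-1$. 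This yields the theorem's multiplicative bound after folding the small $O(\epsilon^2)$ slack between $e^{2\epsilon}-1$ and $2\epsilon$ into the stated constant (equivalently, by running the algorithm with a slightly smaller internal error parameter).

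The cleanest piece is the main case, essentially a geometric interpolation once the two lemmas are in hand. The primary obstacle is the Break case: converting additive closeness in $U$-value to multiplicative closeness in $\hat{C}^t$ requires the log-utility structure of $U$ nontrivially, and some extra care is needed to end up with precisely the constants $(1+2\epsilon)$ rather than the natural $e^{2\epsilon}$.
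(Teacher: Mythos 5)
Your proposal is correct and follows essentially the same route as the paper's proof: condition on the event of Lemma \ref{lemma:U2hat-accurate}, use Lemma \ref{lemma:Trounds} plus the fact that both $\hat{C}^T$ and $C(\q)$ lie in $[a^T,b^T]$ for the normal exit, and use the same expansion of $U(\hat{C}^T)-U(C(\q))$ with $q_j\geq 0$ for the Break exit. The only differences are cosmetic: in the first case the paper exploits that $\hat{C}^T$ is the midpoint to get the factor $(1+e^\epsilon)/2$ rather than your $e^\epsilon$ (both are within $1+2\epsilon$), and in the Break case the paper silently writes $e^{2\epsilon}-1\approx 2\epsilon$ where you explicitly flag and absorb the same $O(\epsilon^2)$ slack.
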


\begin{proof}

    When the algorithm terminates after $T$ iterations (as proven in lemma \ref{lemma:Trounds}) of the while loop, $\frac{b^T}{a^T}\leq e^\epsilon$.
    Combining this with Lemma \ref{lemma:U2hat-accurate} and the fact that $\hat{C}^T=\frac{a^T+b^T}{2}$, we can bound $\frac{\hat{C}^T}{C(\q)}$.
    \begin{align*}
        \frac{a^T+b^T}{2b^T}&\leq \frac{\hat{C}^T}{C(\q)} \leq \frac{a^T+b^T}{2a^T}\\
        \frac{e^{-\epsilon}\cdot b^T+b^T}{2b^T}&\leq \frac{\hat{C}^T}{C(\q)} \leq \frac{a^T+a^T\cdot e^\epsilon}{2a^T}\\
        \frac{e^{-\epsilon}+1}{2}&\leq \frac{\hat{C}^T}{C(\q)} \leq \frac{1+e^\epsilon}{2}\\
        \frac{1+\epsilon/2}{1+\epsilon}&\leq \frac{\hat{C}^T}{C(\q)} \leq 1+\frac{\epsilon}{2} \text{ \qquad  from Taylor expansion}\\
        \frac{1}{1+2\epsilon}&\leq \frac{\hat{C}^T}{C(\q)} \leq 1+2\epsilon
    \end{align*}
    
   If the algorithm terminates early by breaking the while loop at Line \ref{line:else}, 

    \begin{align*}
        |\hat{U}(\hat{C}^T)-\log C_0|\leq \epsilon \implies |U(\hat{C}^T)-\log C_0|\leq 2\epsilon 
    \end{align*}
    Expanding those terms, we get

\begin{align*}
    |\sum_i \left(\frac{1}{N}\log (C(\q)-q_i)-\frac{1}{N}\log (\hat{C}^T-q_i)\right)| &\leq 2\epsilon\\
    %\frac{1}{N}|\sum_i \log (\hat{C}^T-q_i)-\log (C(\q)-q_i)| &\leq 2\epsilon \\
    \frac{1}{N}|\sum_i \log \left(1+\frac{\hat{C}^T-C(\q)}{C(\q)^T-q_i}\right)| &\leq 2\epsilon \\
    \frac{1}{N}|\sum_i \log \left(1+\frac{\hat{C}^T-C(\q)}{C(\q)-0}\right)| &\leq 2\epsilon\text{   as $q_i\geq 0$.}\\
    \left|\log \left(\frac{\hat{C}^T}{C(\q)}\right)\right| &\leq 2\epsilon\\
    \left|\frac{\hat{C}^T-C(\q)}{C(\q)}\right| &\leq e^{2\epsilon} -1 \approx 2\epsilon \\
\end{align*}

\end{proof}

\textit{Addressing preserving the ``constant'' utility property}. The reader or practitioner could wonder if the approximation away from the true cost function leads to gradual drift or systemic error in the cost function value over successive trades. 
The algorithm design ensures that this does not occur. 
The error in the estimate of $C(\q)$ at, say time $t=10$, does not influence the calculation of $C(\q)$ at time $t=11$ as the algorithm recalculates the cost function from the invariance condition, equation \ref{eqn:invariant}, using the new $\q$. 
Practitioners implementing this algorithm will have reliable estimates of the instantaneous price, with the cost function estimate remaining in the $1+2\epsilon$ approximation of the true $C(\q)$ for \textbf{every} trade.

\end{section}

\section{Pricing Interval Securities with CLUM}\label{sec:examples}
In this section, we showcase the use of the proposed approximation algorithm to price interval securities in a prediction market.
We first formalize the problem of pricing a single interval security in a prediction market governed by a constant log utility market maker, and analyze its computational complexity.
%Such market makers are desirable for their bounded-loss properties.
We then demonstrate that for interval securities we have a poly-time oracle to query $q_{max}$, the maximum shares purchased for any outcome and $s_{qmax}$, the number of outcomes that has $q_{max}$.

\subsection{Setting}
We consider $N$ total possible outcomes in the world that are arranged in some order, and an interval security $I_J$ corresponds to a non-empty set of consecutive outcome indices $J \subseteq \{0, 1, ..., N-1\}$.
The purchase of one share of $I_J$ simply transforms the share vector $\q$ into $\q'$, where
\begin{equation*}
    q'_i = 
    \begin{cases}
    q_i + 1 &\text{if $i \in J$,}
    \\
    q_i &\text{otherwise.}
    \end{cases}
\end{equation*}
The market adopts a constant log utility market maker, of which the cost function is implicitly defined as the following
\[\frac{1}{N} \sum_{j=1}^N \log(C(\q)-q_j) = \log(C_0)\]
where $C_0$ is some constant initial cost, and the price of one share of $I_J$ is calculated as $C(\q')-C(\q)$.
The problem lies in computing $C(\q')$, i.e. solving equation \ref{eqn:invariant}, which we recall requires solving for $C$ in
\[\sum_{j=1}^N \log(C-q'_j) = K, \text{      where  } K = N \cdot \log(C_0).\]

\subsection{Hardness of Pricing Intervals Using CLUM}
Here we show that, pricing interval securities in CLUM market is hard, i.e. exponential in number of event $n$ or polynomial in number of outcomes $N$. 
The intuition lies in the fact that the cost function of the constant log utility market maker is globally dependent. 

\begin{proposition}
    Given only query access to entries of $\q$, pricing a single interval security using a Constant Log Utility Market Maker has a computational complexity of $\Omega(N)$, where $N$ is the total number of atomic outcomes.
\end{proposition}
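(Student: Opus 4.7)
The plan is to establish this lower bound by a direct adversary argument in the query model, exploiting the fact that the invariance equation $\frac{1}{N}\sum_j \log(C - q_j) = \log C_0$ makes $C(\q)$ --- and hence the interval price $f(\q) := C(\q + e_J) - C(\q)$, where $e_J$ denotes the indicator vector of $J$ --- strictly depend on every coordinate of $\q$. I will show that two inputs agreeing on all but one coordinate can yield different interval prices, so any query algorithm making fewer than $N$ queries must be wrong on some input.

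Fix any deterministic algorithm $A$ and run it on the all-zeros input $\vec{0}$. If $A$ makes at most $N-1$ queries during this execution, there exists an index $i^*$ it never inspects. Construct an adversarial input $\hat{\q} := M e_{i^*}$ for a parameter $M > 0$ to be chosen. Since $A$ never queries coordinate $i^*$, its execution on $\hat{\q}$ is identical to its execution on $\vec{0}$, so $A$ returns the same answer on both inputs. If I can exhibit $M$ with $f(\vec{0}) \neq f(\hat{\q})$, then $A$ errs on at least one of them, a contradiction.

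To produce such an $M$, I compute $\partial f / \partial q_{i^*}$ at $\q = \vec{0}$ and show it is nonzero. Implicit differentiation of the invariance equation yields
\begin{equation*}
\frac{\partial C}{\partial q_i}(\q) \;=\; \frac{1/(C(\q) - q_i)}{\sum_j 1/(C(\q) - q_j)},
\end{equation*}
so $\partial f / \partial q_{i^*}(\q) = \partial C/\partial q_{i^*}(\q + e_J) - \partial C/\partial q_{i^*}(\q)$. At $\q = \vec{0}$, symmetry forces $C(\vec{0}) = C_0$ and $\partial C / \partial q_{i^*}(\vec{0}) = 1/N$. At $\vec{0} + e_J$ the state is asymmetric --- the coordinates in $J$ sit at height $1$ and the rest at $0$ --- so writing $a := 1/(C' - 1)$ and $b := 1/C'$ with $a > b$, the derivative at $i^*$ equals $a / (|J| a + (N-|J|) b)$ if $i^* \in J$ and $b / (|J| a + (N-|J|) b)$ otherwise, strictly greater than $1/N$ in the first case and strictly less in the second. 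Hence $\partial f / \partial q_{i^*}(\vec{0}) \neq 0$, and continuity supplies an $M > 0$ with $f(\vec{0}) \neq f(\hat{\q})$. Randomized algorithms are handled by applying Yao's principle to the uniform distribution over $\{\vec{0}\} \cup \{M e_i\}_{i=1}^{N}$, yielding the same $\Omega(N)$ bound in expectation.

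The main obstacle is the nonvanishing of $\partial f / \partial q_{i^*}(\vec{0})$, which is the formal content of the paper's informal statement that CLUM is \emph{globally dependent}. The calculation above pins down the sign case-by-case depending on whether $i^* \in J$ or not, and the nontriviality of the interval (neither empty nor the whole outcome set) is exactly what prevents both cases from collapsing to $1/N$. Everything else --- the adversary construction, the continuity step, and the lift to randomized algorithms --- follows standard query-lower-bound lines.
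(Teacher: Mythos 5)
Your proof is correct, and while it shares the paper's adversary skeleton (a sub-$N$-query algorithm misses some coordinate $i^*$, so it cannot distinguish two inputs differing only there), the way you certify that the two true prices actually differ is genuinely different and, in fact, sounder than the paper's. The paper takes $J=\{1,\dots,N\}$ ``for simplicity'' and then appeals to nonlinearity and strict concavity of $C$ to claim $P^A\neq P^B$; but for the all-outcomes security the cost function is translation-invariant, $C(\q+\mathbf{1})=C(\q)+1$, so the price is identically $1$ on every input and the paper's two prices in fact coincide. Your version avoids exactly this degeneracy: the explicit computation $\partial C/\partial q_{i^*}=\frac{1/(C-q_{i^*})}{\sum_j 1/(C-q_j)}$, evaluated at $\vec{0}$ and at $e_J$, shows $\partial f/\partial q_{i^*}(\vec{0})\neq 1/N-1/N=0$ precisely when $J$ is a proper nonempty interval, which is the correct formalization of ``global dependence'' and pins down why the whole-space interval must be excluded. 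The only loose end is the Yao step for randomized algorithms: with your hard distribution a deterministic algorithm could answer the common value of $f(Me_i)$ for $i\in J$ and be correct on a constant fraction of inputs, so you need to note that it still errs on all missed indices on the other side of $J$ (or on $\vec{0}$), giving constant distributional error for $o(N)$ queries; this is routine but worth a sentence. For the deterministic claim actually stated in the proposition, your argument is complete.
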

\begin{proof}
    
We prove by contradiction, assuming that there exists a deterministic algorithm $\mathcal{A}$ that correctly computes the price of an interval security with $o(N)$ queries to the entries of $\q$.
That is, let $T(\mathcal{A}, \q, J)$ be the number of operations performed by $\mathcal{A}$ on a given input, then $\max_{\q, J}T(\mathcal{A}, \q, J) < N$.
Therefore, there is at least an index $j$ such that $q_j$ is not read by $\mathcal{A}$.

We construct two input instances $\q^A$ and $\q^B$ that are identical to each other except at index $j$, i.e., 
\begin{equation*}
    q^B_i = 
    \begin{cases}
    q^A_i &\text{if $i \neq j$,}
    \\
    q^A_i + \delta &\text{if $i = j$, for some $\delta > 0$.}
    \end{cases}
\end{equation*}
For simplicity, we let the purchased interval be $J=\{1,...,N\}$, representing the purchase of a security that pays out on any outcome.
Thus, we have $\q'^A = \q^A+\mathbf{1}$ and $\q'^B = \q^B+\mathbf{1}$
Let $P^A$ and $P^B$ be the correct prices for inputs $(\q^A, J)$ and $(\q^B, J)$.

Given our assumption of sublinear time, when algorithm $\mathcal{A}$ runs on input $(\q^B, J)$, it follows the same executions as for input $(\q^A, J)$, as all the input values it reads are identical, i.e., $q^A_i = q^B_i$ for $i \neq j$.
Thus, $\mathcal{A}$ must compute the same price, $P_\mathcal{A}^A = P_\mathcal{A}^B$, for both input instances.

Now we show that the true prices are different.
Let $C'^A$ be the root of the equation:
\[\sum_{i=1}^N \log(C-(q^A_i+1)) = K.\]
Let $C'^B$ be the root of the equation:
\[\log(C-(q^A_j+\delta+1))+\sum_{i\neq j} \log(C-(q^A_i+1)) = K.\]
For any valid $C > q_{\max}$, $\log(C-(q^A_j+\delta+1)) < \log(C-(q^A_j+1))$, thus $C'^B > C'^A$ in order to satisfy the equality to $K$.
The same thing holds for the initial costs, i.e., 
$C^B > C^A$.
Since the cost function $C$ is non-linear and strictly concave, the change in a single entry $q_j$ has a non-linear effect and the difference of the two costs before and after purchasing $J$ can not be the same, i.e., $P^A \neq P^B$.

This leads to the contradiction, and thus there exists \textit{no} such sublinear algorithm that can correctly compute the price of interval securities under CFMM.
\end{proof}

\subsection{Computing $q_{\max}$ and Number of Outcomes with $q_{\max}$}

To achieve performance that is independent of the outcome universe size $N$, we use an augmented, self-balancing binary search tree (BST). 
This structure only creates nodes at the endpoints of purchase intervals, making it significantly more efficient in both time and space. 

\subsubsection{Node and Tree Structure.}
The data structure is a self-balancing BST (e.g., an AVL or Red-Black Tree) where each node represents a unique endpoint of a previously purchased interval.
Each node is annotated with the following basic information:
\begin{itemize}
    \item \texttt{key}: The integer value of the endpoint.
    \item \texttt{value}: The share count for the elementary interval that begins at this node's key.
    \item \texttt{left\_child} and \texttt{right\_child}: These are pointers to child nodes, initialized to none.
    %, i.e., for the range [key, node.right.key-1].
\end{itemize}
Each node is also annotated with the following augmented data:
\begin{itemize}
    \item \texttt{max\_val}: The maximum number of shares held for any elementary interval in the subtree.
    \item \texttt{max\_count}: The total number of \textit{atomic outcomes} that have the \texttt{max\_val} in the subtree.
    \item \texttt{lazy\_add}: A pending increment to be applied to all nodes in this subtree.
\end{itemize}

\subsubsection{Initialization.}
The tree starts empty. 
This represents a single, infinite elementary interval covering all possible outcomes. 
Then we will create the very first node, and the tree will grow dynamically from there as more unique interval endpoints are introduced through purchases. 
% The inherited_value for the first nodes will be 0, reflecting the initial state of the market.

\subsubsection{Update after an interval purchase.}
The algorithm (Algorithm~\ref{alg:bst_complete_full}) operates using a ``split-update-merge'' strategy to efficiently apply changes when a specific interval $[l, r]$ is purchased. 
First, it ensures that the interval's precise boundaries exist within the tree by inserting $l$ and $r+1$ as nodes if they are not already present. 
This step carves out the exact elementary intervals that will be affected. 
Next, it uses two \textit{split} operations to break the tree into three parts: all intervals before $l$, all intervals within $[l, r]$, and all intervals after $r$. 
The \textit{update} itself can be performed efficiently by applying a single lazy value to the root of the middle tree that represents the target range. 
Finally, two \textit{merge} operations combine the three parts back into a single, balanced tree. 

We note that throughout this process, from splitting and merging to rebalancing rotations, the augmented data (\texttt{max\_val, max\_count}) in each node is continuously recalculated based on itself and its two children, ensuring the global maximum at the root remains correct at all times.

\subsubsection{Query.}
Query the maximum quantity purchased \texttt{root.max\_val} and the number of outcomes at this maximum \texttt{root.max\_count} are $O(1)$ operations.

Let $U$ be the number of distinct interval purchases made and $k$ be the number of unique endpoints $(k
\le 2U)$.
Overall, we will need $O(k)$ to store the intervals, and it takes $O(\log k)$ for each interval purchase.
Specifically, the logarithmic time complexity is a direct result of using a self-balancing BST. 
The height of such a tree is always guaranteed to be proportional to the logarithm of the number of nodes, $k$.

\texttt{EnsureEndpointExists} involves a search ($O(\log k)$), potentially an insertion ($O(\log k)$), and a rebalance ($O(\log k)$). The total time is therefore $O(\log k)$.
\texttt{Split} and \texttt{Merge} are standard operations on balanced trees that work by traversing a path from the root. Their complexity is determined by the height of the tree, making them $O(\log k)$ as well.
\texttt{PushDown} and \texttt{UpdateAugmentedData} operate on a single node and its immediate children, taking constant, $O(1)$, time per call. They are called during the logarithmic-time path of the other operations, so they do not increase the overall complexity.
Since the main \texttt{IntervalPurchase} procedure consists of a fixed number of these $O(\log k)$ operations, its total time complexity remains $O(\log k)$.

\begin{algorithm}[t]
\caption{Interval market: $q_{max}$ and frequency of $q_{max}$ supported by an augmented self-balancing BST}
\label{alg:bst_complete_full}
\begin{algorithmic}[1]
% \small

% \begin{flushleft}
\Procedure{IntervalPurchase}{$tree, l, r, val$}
    % \Comment{Purchase $[l, r]$}
    \State \Call{EnsureEndpointExists}{$tree, l$}
    \State \Call{EnsureEndpointExists}{$tree, r + 1$}
    
    \State $(T_{left}, T_{mid\_and\_right}) \leftarrow \Call{Split}{tree, l}$
    \State $(T_{mid}, T_{right}) \leftarrow \Call{Split}{T_{mid\_and\_right}, r+1}$
    
    \If{$T_{mid}$ is not empty}
        \State $T_{mid}.root.lazy\_add \leftarrow T_{mid}.root.lazy\_add + val$
    \EndIf
    
    \State $tree \leftarrow \Call{Merge}{T_{left}, T_{mid}}$
    \State $tree \leftarrow \Call{Merge}{tree, T_{right}}$
\EndProcedure
% \vspace{1em}
\end{algorithmic}
\end{algorithm}

\begin{algorithm}[t]
\caption*{\textbf{Supporting functions} for augmented self-balancing BST}
\label{alg:support_bst_complete_full}
\begin{algorithmic}[1]
\small
\Function{EnsureEndpointExists}{$tree, key$}
    \Comment{Insert key if not there.}
    \State $p \leftarrow \Call{FindPredecessorNode}{tree, key}$ 
    % \Comment{Find node with largest key $\le$ `key'}
    \If{$p$ is null \textbf{or} $p.key \neq key$}
        \State $inherited\_value \leftarrow (p \text{ is not null}) \text{ ? } p.value : 0$
        \State \Call{Insert}{$tree, key, inherited\_value$} \Comment{BST insertion}
        \State \Call{Rebalance}{tree}
    \EndIf
\EndFunction
\vspace{1em}

\Function{PushDown}{$node$}
    \Comment{Push pending lazy value down}
    \If{$node$ is null \textbf{or} $node.lazy\_add = 0$} \State \Return \EndIf
    
    \State $node.value \leftarrow node.value + node.lazy\_add$
    \State $node.max\_val \leftarrow node.max\_val + node.lazy\_add$
    
    \If{$node$ is not a leaf}
        \If{$node.left\_child$ is not null}
            \State $node.left\_child.lazy\_add \leftarrow$ \State $node.left\_child.lazy\_add + node.lazy\_add$
        \EndIf
        \If{$node.right\_child$ is not null}
            \State $node.right\_child.lazy\_add \leftarrow$ \State $node.right\_child.lazy\_add + node.lazy\_add$
        \EndIf
    \EndIf
    
    \State $node.lazy\_add \leftarrow 0$
\EndFunction
\vspace{1em}

\Function{UpdateAugmentedData}{node}
    \Comment{Recalculates a node's augmented data from its children.}
    \State \Call{PushDown}{$node.left\_child$} 
    \State \Call{PushDown}{$node.right\_child$}
    \State $mval \leftarrow node.value$
    \If{$node.left\_child$} 
    \State $mval \leftarrow \max(mval, node.left\_child.max\_val)$ \EndIf
    \If{$node.right\_child$} 
    \State $mval \leftarrow \max(mval, node.right\_child.max\_val)$ \EndIf
    \State $node.max\_val \leftarrow mval$, 
    $mcount \leftarrow 0$
    \If{$node.value=mval$}
        \State $next\_key \leftarrow \Call{SuccessorKey}{node}$
        \State $mcount \leftarrow mcount + (next\_key - node.key)$
    \EndIf
    \If{$node.left\_child$ \textbf{\&} $node.left\_child.max\_val=mval$}
        \State $mcount \leftarrow mcount + node.left\_child.max\_count$
    \EndIf
    \If{$node.right\_child$ \textbf{\&} $node.right\_child.max\_val=mval$}
        \State $mcount \leftarrow mcount + node.right\_child.max\_count$
    \EndIf
    \State $node.max\_count \leftarrow mcount$
\EndFunction

\vspace{1em}
\Function{Merge}{$T_{left}, T_{right}$} \Comment{Merge using a max-key pivot} \EndFunction
\Function{Split}{node, split\_key} \Comment{Standard recursive implementation} \EndFunction
\Function{Rebalance}{node} \Comment{AVL/Red-Black rotations} \EndFunction
% \end{flushleft}

\end{algorithmic}
\end{algorithm}

\if 0
We show that we can design a dynamic, augmented segment tree with additional annotations and lazy propagation to perform both interval purchase updates and queries for the maximum quantity and its count in logarithmic time.
This data structure avoids pre-allocating memory for all $N$ outcomes, making it ideal for large outcome spaces.

\subsubsection{Node structure.}
Each node in the tree $z \in \mathcal{T}$ represents a range of outcomes and contains the following information:
\begin{itemize}
    \item \texttt{max\_val}: The maximum number of shares held for any atomic outcome in this range.
    \item \texttt{max\_count}: The number of outcomes within this range that have \texttt{max\_val}.
    \item \texttt{lazy\_add}: A pending value to be added to all outcomes in this range. This is for lazy propagation, which makes range updates fast.
    \item \texttt{left\_child} and \texttt{right\_child}: These are pointers to child nodes, initialized to none.
\end{itemize}

\subsubsection{Dynamic Initialization.}
The tree starts with only a single root node representing the entire outcome space. 
Nodes are created on-demand during update operations. 
When a new node representing a range of size $k$ is created, it is initialized to reflect that all outcomes within it are untouched.
Specifically, we initialize its annotation as \texttt{max\_val}: 0, \texttt{max\_count}: $k$, and \texttt{lazy\_add}: 0.
For each internal node, we compute its data by merging the information from its left and right children.
\begin{algorithm}
% \caption{Merge two nodes based on max value}
\begin{algorithmic}[1]
\Function{Merge}{left, right} based on \texttt{max\_val}
    \If{$\texttt{left.max\_val} > \texttt{right.max\_val}$}
        \State \Return $\{\texttt{max\_val}: \texttt{left.max\_val}, \texttt{ max\_count}: \texttt{left.max\_count}\}$
    \ElsIf{$\texttt{right.max\_val} > \texttt{left.max\_val}$}
        \State \Return $\{\texttt{max\_val}: \texttt{right.max\_val}, \texttt{ max\_count}: \texttt{right.max\_count}\}$
    \Else 
        \State \Return $\{\texttt{max\_val}: \texttt{left.max\_val}, \texttt{ max\_count}: \texttt{left.max\_count} + \texttt{right.max\_count}\}$
    \EndIf
\EndFunction
\end{algorithmic}
\end{algorithm}

\subsubsection{Update after an interval purchase.}
When an interval security for the range $[l, r)$ is purchased for $val \in \mathbb{Z}$ shares, we call an \texttt{Update} function on the tree. 
This function recursively traverses the tree, creating nodes as needed.
We also update the tree correspondingly to maintain \texttt{max\_val} and \texttt{max\_count}.

\begin{algorithm}
\caption{Interval update on a segment tree with lazy propagation}
\label{alg:update}
\begin{algorithmic}[1]
\Procedure{IntervalPurchase}{$tree, l, r, val$}
    \State \Call{Update}{$tree.root, l, r, val$}
\EndProcedure
\vspace{1em}

\Function{PushDown}{$node$}
    \Comment{Propagates a node's pending lazy value}
    \If{$\texttt{node.lazy\_add} \neq 0$}
        \State $\texttt{node.max\_val} \leftarrow \texttt{node.max\_val} + \texttt{node.lazy\_add}$
        \If{node is not a leaf}
            \State $\texttt{node.left.lazy\_add} \leftarrow \texttt{node.left.lazy\_add} + \texttt{node.lazy\_add}$
            \State $\texttt{node.right.lazy\_add} \leftarrow \texttt{node.right.lazy\_add} + \texttt{node.lazy\_add}$
        \EndIf
        \State $\texttt{node.lazy\_add} \leftarrow 0$ \Comment{Reset the lazy value}
    \EndIf
\EndFunction
\vspace{1em}

\Function{Update}{$node, l, r, val$}
    \Comment{Recursively updates the range with val}
    
    \State \Call{PushDown}{$node$} \Comment{Always propagate lazy values first}
    
    % 1. No Overlap
    \If{$\texttt{node.range\_r} < l$ \textbf{ or } $\texttt{node.range\_l} > r$}
        \State \Return \Comment{The node's range is completely outside the update range}
    \EndIf
    
    % 2. Full Overlap
    \If{$l \leq \texttt{node.range\_l}$ \textbf{ and } $\texttt{node.range\_r} \leq r$}
        \State $\texttt{node.max\_val} \leftarrow \texttt{node.max\_val} + val$
        \If{node is not a leaf}
            \State $\texttt{node.left.lazy\_add} \leftarrow \texttt{node.left.lazy\_add} + val$
            \State $\texttt{node.right.lazy\_add} \leftarrow \texttt{node.right.lazy\_add} + val$
        \EndIf
        \State \Return
    \EndIf
    
    % 3. Partial Overlap
    \State \Call{Update}{$\texttt{node.left}, l, r, val$} \Comment{Recurse on the left child}
    \State \Call{Update}{$\texttt{node.right}, l, r, val$} \Comment{Recurse on the right child}
    
    \State $node \leftarrow \Call{Merge}{\texttt{node.left}, \texttt{node.right}}$ \Comment{Update value from its children}
    
\EndFunction
\end{algorithmic}
\end{algorithm}

\subsubsection{Query.}
Based on the augmented segment tree, query the maximum quantity purchased \texttt{root.max\_val} and the number of outcomes at this maximum \texttt{root.max\_count} are $O(1)$ operations.
Overall, we will need $O(N)$ to store the segment tree, and it takes $O(N)$ to build the tree once and $O(\log N)$ for each interval purchase.
\fi

\subsubsection{Approximating $C(\q)$} Given the oracle provided above to obtain $q_{\max}$ and $s_{qmax}$, we can apply Algorithm~\ref{alg:cost_approx} to approximate the cost $C(\q)$. 
Finding the exact root would otherwise take at least $\Omega(N)$, where $N$ can be huge representing the total number of atomic outcomes.
For example, if we use Newton's method to iteratively find the root and let $\epsilon$ denote the desired precision, it requires $O(N \log(1/\epsilon))$, where the number of iterations to converge can depend logarithmically on $1/\epsilon$.
% \xintong{I can shorten this section 4 as needed.}

\section{Conclusion and Future directions}
%\begin{comment}
% \textit{Conclusions} - 
In this paper, we address some of the fundamental questions regarding the efficiency of constant log utility market makers.
We show that pricing of the securities in this market falls under the \#P-hard complexity class.
To the best of our knowledge, we give the first approximation algorithm for this combinatorial market with $\epsilon-\delta$ guarantees, conditional on having access to an oracle of computing $q_{max},s_{qmax}$.
We demonstrate a crucial application: market trading of interval securities.
We show that the aforementioned oracle can be implemented in poly-time using an augmented self-balancing binary search tree, effectively rendering the overall pricing problem tractable for this important market structure.

%\textit{ORACLE} - 
\paragraph{Future directions} 
A promising direction is to identify and characterize other structured security classes (beyond interval securities and ones discussed in Section \ref{sec:approx_C}) where an oracle can be implemented efficiently. 
Another open direction is to improve the 64-approximation achieved in Appendix \ref{sec:other_approx}.
Mispricing by a multiplicative factor of 64 can be catastrophic from the market maker's perspective.
Progress in this direction can be helpful when securities do not belong to a well defined class or when designing oracle is computationally hard.

\section{Acknowledgments} We would like to sincerely thank anonymous reviewers for their helpful comments.
%\section{Citations and References}

%%%%%%%%%%%%%%%%%%%%%%%%%%%%%%%%%%%%%%%%%%%%%%%%%%%%%%%%%%%%%%%%%%%%%%%%

%%% The next two lines define, first, the bibliography style to be 
%%% applied, and, second, the bibliography file to be used.

\bibliographystyle{ACM-Reference-Format} 
\bibliography{modelcounting}
%\begin{comment}
\pagebreak
\appendix
%\section{Appendix}
\section{Alternative approximation scheme} \label{sec:other_approx}
As we have seen that pricing in the market is hard, the natural question to ask is if we can approximate the solution in a reasonable way.
But as we see in Algorithm \ref{alg:reduction}, there is a reduction from  approximating \#SAT to an instance of this market i.e. the problem of approximation is not in FPTAS . 
%\maneesha{not true anymore as that is not well proven?}

Finding price of a security is quite close to the problem of finding weighted model counting in SAT problems.
While there is a lot of literature that caters to heuristics for solving model counting and weighted model counting, existing research doesn't have algorithms that provide theoretical guarantees.
One of the exceptions of this being the WISH algorithm proposed by \cite{ermon2013taming}.

Let the security that needs pricing be $S$. To achieve theoretical guarantees we need this security to be of the form of disjunction of two events. We modify the WISH algorithm slightly to be able to price such $S$. 

\textbf{Input:} A weight function $w:\Sigma\to\R^+$, $n=\log_2 \Omega$, $\delta, \epsilon, S$.
Here $\delta$ is the confidence level of the approximation of the answer.

\textbf{Output:} Estimate of price of security $S$.
\begin{algorithm}[H]
    \caption{WISH}
    \label{alg:wish}
    \begin{algorithmic}[1]
    \State T = $\lceil \frac{\ln(n/\delta)}{\alpha}\rceil$; $k$ 
    \State for $i = 0,1,\cdots, n$ do
    \State \quad for $t = 1,\cdots, T$ do
    \State \quad\quad Sample uniformly a hash function $h_{A,b}^i:\Omega\to \{0,1\}^n$.
    \State\quad\quad $w_i^t = \max \{kmax_\omega w(\omega)$ such that $(A\omega = b) \bigcap S$ \}.
    \State\quad\quad $w_i^{t'} = \max \{kmax_\omega w(\omega)$ such that $(A\omega = b) \bigcap (\neg S)$\}.
    \State\quad end for
    \State\quad $M_i= \textit{Median}(w_i^1,\cdots,w_i^T)$
    \State\quad $M_i'= \textit{Median}(w_i^{1'},\cdots,w_i^{T'})$
    \State end for
    \State Let $N = M_0+\Sigma_{i=0}^{n-1}M_{i+1}\cdot2^i$, $N'= M_0'+\Sigma_{i=0}^{n-1}M_{i+1}'\cdot2^i$.
    \State Return $\frac{N}{N+ N'}$
    \end{algorithmic}
\end{algorithm}
In the above algorithm, $kmax$ is a $k$-MAP oracle call.
We do this as appropriate choice of $k$ increases the probability of finding a non-trivial max element that belongs to $S$ and $\neg S$.

We prove upper and lower bounds for the price to lie in the range we predict using WISH algorithm.
We follow the proof by \cite{ermon2013taming} very closely except we try to tighten the lower bound of $N$ and tighten the upper bound of $N'$.
This way we can achieve a closer lower bound on price which is desirable to lower bound the loss the market maker will take on when pricing securities.

Fix an ordering of $\omega_i$ s such that $w(\omega_j)\geq w(\omega_{j+1})$ for all $1\leq j\leq 2^n$.
Define $b_i = w(2^i)$ and bin $B_i \sim \{\omega_{2^i+1},\cdots,\omega_{2^{i+1}}\}$.
Note that bin $B_i$ has $2^i$ outcomes.
\begin{lemma}\label{lemma:unequal-limits}
    For any $c \geq 2$ and $k=12$, there exists $\alpha^* >0 $ such that for $0<\alpha\leq\alpha^*$,
\begin{equation*}
    Pr[M_i'\in[b_{\min\{i+c,n\}},b_{\max\{i-1,0\}}]]\geq 1-\exp{(-T\alpha)}
\end{equation*}
\end{lemma}
\begin{proof}
    $w_{ij}$ are the k-max elements.
    From Lemma 1 of \cite{ermon2013taming}, we can say the following -
    \begin{align*}
        Pr[w_i' \geq b_{i+c}] &= 1- Pr[(w(\sigma_{ij}') \leq b_{i+c} \lor \sigma_{ij}' \notin \neg S) , \forall j \in [0,k]]\\
        &= 1- Pr[(w(\sigma_{ij}') \leq b_{i+c} \lor \sigma_{ij}' \notin \neg S) ]^k\\
        &= 1- \left(1-Pr[w(\sigma_{ij}') \geq b_{i+c}]\cdot Pr[\sigma_{ij}' \notin S ], \text{some } j\right)^k\\
        &\geq 1- \left(1-\frac{5}{9}\cdot (\frac{3}{4})\right)^k\\
        & \geq 0.52387 > 0.5
    \end{align*}
    for $c\geq 2$.
    And for the upper bound, lets define \\$S_j(h^i)=\sum_{\{\sigma\in\mathcal{X}_j\}} \mathbf{1}_{A\sigma=b (mod 2)}$ where $\mathcal{X}_j$ is set of $2^j$ heaviest outcomes.
    \begin{align*}
        Pr[w_i' \leq b_{i-1}] &= Pr[w_{ij} \leq w(\sigma_{2^{i-1}}) \forall j\in[0,k]] \\
        &+ Pr[w_{ij}' \in S , \forall j \in [0,k]]\\
        &= Pr[w_{i1} \leq w(\sigma_{2^{i-1}}) ] + Pr[w_{ij}' \in S , \forall j \in [0,k]]\\
        &= Pr[S_{i-1}(h^i)=0] + (\frac{3}{4})^k\\
        &= (1-\frac{1}{2^i})^{2^{i-1}}+ (\frac{3}{4})^k\\
        &\geq 0.5 + 0.03167 = 0.53167\\
    \end{align*}
The last inequality is because $(1-\frac{1}{2^i})^{2^{i-1}}$ is monotonically increasing in $i$ and is always greater than equal to 0.5.

From these, using Chernoff inequality and Hausdorff inequalities, we can say that 
\begin{align*}
    Pr[M_i'\leq b_{i-1}] &\geq 1-\exp^{-\alpha_1 T}\\
    Pr[M_i'\geq b_{i+c}] &\geq 1-\exp^{-\alpha_2 T}\\
\end{align*}
where $M_i'$ is the median of $\{w_i^{1'},\cdots,w_i^{T'}\}$, $\alpha_1=2(0.52387-0.5)^2 = 0.0011$ and $\alpha_2=2(0.53167-0.5)^2 = 0.002$.

Putting them together, 
\begin{align*}
    Pr[b_{i+c}\leq M_i' \leq b_{i-1}] &\geq 1-\exp^{-\alpha_1 T}-\exp^{-\alpha_2 T}\\
    &\geq 1-2\exp^{-\alpha_1 T}\\
    &\geq 1-\exp^{-\alpha^* T}
\end{align*}
where $\alpha^*=\ln2 \alpha_1 = 0.000762$
\end{proof}
Similarly we can hope to show that

\begin{lemma} \label{lemma:limits}
    Let $L'=b_0+\sum_{0}^{n-1} b_{\min\{i+c+1,n\}}2^i$ and $U'=b_0+\sum_{0}^{n-1} b_{\max\{i,0\}}2^i$. Then $U'\leq 2^{c+1}L'$
\end{lemma}
\begin{proof}
    \begin{align*}
        L'&= b_0 + \sum_{i=0}^{n-c-2} b_{i+c+1}2^i + \sum_{n-c-1}^{n-1} b_{n}2^i\\
        &= b_0 + \sum_{i=0}^{n-c-2} b_{i+c+1}2^i + \sum_{n-c-1}^{n-1} b_{n}2^i\\
    \end{align*}
    \begin{align*}
        U'&= b_0 + \sum_{i=0}^{0} b_{0}2^i + \sum_{i=1}^{n-1} b_{i}2^i\\
        &= 2\cdot b_0 + 2\cdot \sum_{i=1}^{n-1} b_{i}2^{i-1}\\
        &= 2\cdot b_0 + 2( \sum_{i=1}^{c} b_{i}2^{i-1}+\sum_{i=c+1}^{n-1} b_{i}2^{i-1})\\
        &\leq 2\cdot b_0 + 2( \sum_{i=1}^{c} b_{0}2^{i-1}+\sum_{i=c+1}^{n-1} b_{i}2^{i-1}) \\
        &\leq 2\cdot 2^c\cdot b_0 + 2 \cdot 2^c( \sum_{i=c+1}^{n-1} b_{i}2^{i-1-c})\\
        &\leq 2^{c+1} \left( b_0 +  \sum_{i=c+1}^{n-1} b_{i}2^{i-1-c}\right)\\
        &\leq 2^{c+1} \left( b_0 +  \sum_{i=c+1}^{n-1} b_{i}2^{i-1-c}+\sum_{i=n-c-1}^{n-1} b_{n}2^{i}\right)\\
        &\leq 2^{c+1} L'
    \end{align*}
\end{proof}

\begin{lemma}
    For any $c \geq 2$ and $k=12$, there exists $\alpha^* >0 $ such that for $0<\alpha\leq\alpha^*$,
\begin{equation*}
    Pr[M_i\in[b_{\min\{i+1,n\}},b_{\max\{i-c,0\}}]]\geq 1-\exp{(-T\alpha)}
\end{equation*}
\end{lemma}
%\maneesha{I have some concerns about proving the above.}
And
\begin{lemma}
    Let $L=b_0+\sum_{0}^{n-1} b_{\min\{i+1+1,n\}}2^i$ and $U=b_0+\sum_{0}^{n-1} b_{\max\{i-c,0\}}2^i$. Then $U\leq 2^{c+1}L$
\end{lemma}

Proofs of these lemmas can be looked up from \cite{ermon2013taming}.
\begin{theorem}
    For any $\delta>0$ and $\alpha \leq 0.000762$, Algorithm \ref{alg:wish} makes $\Theta(n\ln n/\delta)$ k-MAP queries and, with probability atleast $\delta$, outputs a 64-approximation of price of security $S$.
\end{theorem}
\begin{proof}
    We can see from Lemma \ref{lemma:limits} and \ref{lemma:unequal-limits} that $N\in [L,U], N'\in [L',U']$, where $U'\leq 8 L', U\leq 8L$ with probability atleast $1-\delta$ when $T=\lceil\frac{\ln(n/\delta)}{\alpha}\rceil$.
    This implies that price of security $S$ is bounded below by $\frac{L}{L+8L'}$ and bounded above by $\frac{8L}{8l+L'}$ and hence it gives a 16-approximation.
\end{proof}
%\end{comment}
\end{document}